 \definecolor{BLACK}{gray}{0}
 \definecolor{WHITE}{gray}{1}
 \definecolor{RED}{rgb}{1,0,0}
 \definecolor{GREEN}{rgb}{0,1,0}
 \definecolor{BLUE}{rgb}{0,0,1}
 \definecolor{CYAN}{cmyk}{1,0,0,0}
 \definecolor{MAGENTA}{cmyk}{0,1,0,0}
 \definecolor{YELLOW}{cmyk}{0,0,1,0}
  \theoremstyle{plain}
  \newtheorem*{thm*}{\protect\theoremname}
  \theoremstyle{remark}
    \newtheorem{rem}{\protect\remarkname}
    \newtheorem{rem}{\protect\remarkname}[chapter]
  \theoremstyle{definition}
    \newtheorem{defn}{\protect\definitionname}
    \newtheorem{defn}{\protect\definitionname}[chapter]
  \theoremstyle{plain}
    \newtheorem{lem}{\protect\lemmaname}
    \newtheorem{lem}{\protect\lemmaname}[chapter]
  \providecommand{\definitionname}{Definition}
  \providecommand{\lemmaname}{Lemma}
  \providecommand{\remarkname}{Remark}
  \providecommand{\theoremname}{Theorem}
\begin{document}

\title{The gap of Fredkin quantum spin chain is polynomially small}

\author{Ramis Movassagh }
\email{q.eigenman@gmail.com}

\selectlanguage{english}%

\affiliation{Department of Mathematical Sciences, IBM T. J . Watson Research Center,
Yorktown Heights, NY 10598}

\date{\today}
\begin{abstract}
We prove a new result on the spectral gap and mixing time of a Markov
chain with Glauber dynamics on the space of Dyck paths (i.e., Catalan
paths) and their generalization, which we call colored Dyck paths.

Let $2n$ be the number of spins. We prove that the gap of the Fredkin
quantum spin chain Hamiltonian \cite{salberger2016fredkin,dell2016violation}, 
is $\Theta(n^{-c})$ with $c\ge2$. Our results on the spectral gap
of the Markov chain are used to prove a lower bound of $O(n^{-15/2})$ 
on the energy of first excited state above the ground state of the Fredkin quantum spin chain. We prove an upper
bound of $O(n^{-2})$ using the universality of Brownian motion and
convergence of Dyck random walks to Brownian excursions. Lastly, the
'unbalanced' ground state energies are proved to be polynomially small
in $n$ by mapping the Hamiltonian to an effective hopping Hamiltonian
with next nearest neighbor interactions and analytically solving
its ground state.
\end{abstract}
\maketitle
\tableofcontents{}
\section{\label{sec:Context-and-summary}Context and Summary of the Results}
In recent years there has been a surge of activities in developing
new exactly solvable models that violate the area law for the entanglement
entropy \cite{Movassagh2012_brackets,movassagh2016supercritical,dell2016violation,salberger2016fredkin}.
The notion of exactly solvable in these works means that the ground
state can be written down analytically and the gap to the first excited
state is quantified. Understanding the gap is important for the physics
of quantum many-body systems and in particular these models.

In \cite{movassagh2016supercritical} a surprising new class of exactly solvable
quantum spin chain models were proposed. These models have positive
integer spins ($s>1$), the Hamiltonian is nearest neighbors with
a unique ground state that can be seen as a uniform superposition
of $s-$colored Motzkin walks defined below. The half-chain entanglement entropy
provably violates the area law by a square root factor in the system's
size ($\sim\sqrt{n}$). The power-law violation of the entanglement
entropy in that work provides a counter-example to the widely believed
notion, that translationally invariant spin chains with a unique ground
state and local interactions can violate the area law by at most a
logarithmic factor in the system's size. 

In \cite{Movassagh2012_brackets,movassagh2016supercritical} the gap to the
first excited state was quantified and it was found that in both cases
the gap vanishes as a polynomial in the system's size. And in \cite{movassagh2017entanglement}
the Hamiltonian for $s=1$ was expressed in standard spin basis, and
certain physically relevant quantities such as the spin correlation
functions in the ground state, and the block von Neumann and Renyi
entanglement entropies of the ground state were analytically calculated.

Recently, Olof Salberger and Vladimir Korepin extended the model presented in \cite{movassagh2016supercritical}  to half-integer spin chains and named the new model Fredkin spin chain \cite{salberger2016fredkin,dell2016violation}.
The proposed Hamiltonian is $3-$local, the ground state is unique
and may be seen as a uniform superposition of $s-$colored Dyck walks.
For $s=1/2$ and $s>1/2$, the half-chain entanglement entropy violates
the area law by $\sim\log(n)$ and $\sim\sqrt{n}$, respectively. Although
certainly gapless, the recent papers \cite{dell2016violation,salberger2016fredkin}
did not obtain the gap of their model. We prove that the gap of
Fredkin spin chain \cite{dell2016violation,salberger2016fredkin}
vanishes as $n^{-c}$ with $c\ge2$.  

The area law has been rigorously proved in one-dimensional gapped systems \cite{Matth_areal} . Moreover, when the ground state is unique but the energy gap vanishes in the thermodynamical limit, it is expected that the area-law conjecture still holds, but now with a possible logarithmic correction, which in one-dimension implies an entanglement entropy of $O(\log n)$. In other words, as long as the ground state is unique, the area-law can be violated by at most a logarithmic factor. This belief is mainly based on analysis of $1+1$-dimensional critical system that in the continuum are describable by conformal field theories as well as Fermi liquids. In relativistic conformal field theories the elementary excitations (e.g., gap) scale as $1/n$. A corollary of the $O(1/n^2)$ upper bound on the energy gap in this work, therefore, is that the Fredkin spin-chain also does not have  a relativistic conformal field theory continuum limit. 

Section \ref{sec:New-Results-MC} defines the so called Fredkin \textit{Markov}
chain whose state space is the set of all Dyck paths (Catalan paths)
and their generalization which we call colored Dyck paths. In section
\ref{sec:New-Results-MC} we prove a lower-bound on the spectral gap
of Fredkin Markov chain and hence an upper-bound on its mixing time.
The main mathematical techniques used are the comparison theorem of
Diaconis and Saloff-Coste \cite{diaconis1993comparison} and our previous
results \cite{Movassagh2012_brackets,movassagh2016supercritical}. Section
\ref{sec:New-Results-MC} is independent, purely mathematical and
divorced from any ``quantum'' notions. Therefore it can be read
independently.

In Section \ref{sec:Hamiltonian-Gap-of} we discuss the general mapping
between certain quantum Hamiltonians and classical Markov chains and
give an upper-bound on the gap of Fredkin quantum spin chain that
is polynomially small in the system's size. We do so by ``twisting''
the states in the ground state using phases that are proportional
to the area under the Dyck paths, and then utilize universality of
Brownian motion, and convergence of Dyck random walks to Brownian
excursions to give the desired upper-bound. A polynomially small lower-bound
on the gap is obtained in two steps: 1. Using the results of Section
\ref{sec:New-Results-MC}, the gap of Fredkin spin chain is lower-bounded
in a particular subspace, which we denote as the ``balanced'' subspace.
2. The smallest eigenvalue (ground state) of the Hamiltonian restricted
to the complement of that subspace (i.e., ``Unbalanced'' subspace)
is lower bounded to be polynomially small by mapping the Hamiltonian
onto an effective hopping Hamiltonian with next nearest neighbor interactions and explicitly solving its ground
state and ground state energy.

\section{\label{sec:New-Results-MC}New Results on Markov Chain Mixing Times}
This section is self-contained and includes a new mathematical result
on the mixing times of certain random walk Markov chains. The presentation
in this section, therefore, solely follows the mathematical literature
of Markov chains and can be read independently of the other sections.

\textbf{1. Preliminaries on Markov Chains.} Let $P(x,y)$ be an irreducible
Markov chain on the state space $\Omega$ with the stationary probability
distribution $\pi$. Everywhere we take the stationary distribution
to be uniform, all the graphs undirected, and assume the Markov chain
is reversible
\[
\pi(x)P(x,y)=\pi(y)P(y,x).
\]
$P$ has eigenvalues $1=\lambda_{0}>\lambda_{1}\ge\cdots\ge\lambda_{|\Omega|-1}\ge-1$.
The gap of the Markov chain is defined to be $1-\lambda_{1}$. 

When a Markov chain starts in state $x_{0}$ and runs for $t$ steps,
we denote the state at time $t$ by $P_{x_{0}}^{t}$. If the Markov
chain is connected and aperiodic, then as $t\rightarrow\infty$ the
distribution of $P_{x_{0}}^{t}$ converges to the unique stationary
distribution $\pi$. The measure of the distance between $P_{x_{0}}^{t}$
and $\pi$ is usually given by the total variational distance defined
by
\[
\Delta_{x_{0}}(t)\equiv||P_{x_{0}}^{t}-\pi||_{TV}=\max_{A\in\Omega}|P_{x_{0}}^{t}(A)-\pi(A)|=\frac{1}{2}\sum_{y\in\Omega}|P_{x_{0}}^{t}(y)-\pi(y)|=\frac{1}{2}\left\Vert P_{x_{0}}^{t}-\pi\right\Vert _{1}.
\]

Following \cite{randall2000analyzing} we define the relationship
between the mixing time and the second eigenvalue of the Markov chain
matrix. For $\epsilon>0$, the \textit{mixing time}, starting from
state $x_{0}$ is defined by 
\[
\tau_{x_{0}}(\epsilon)=\min\{t:\text{ }\Delta_{x_{0}}(t')\le\epsilon,\quad\forall t'\ge t\}.
\]

Everywhere we refer to \textit{mixing time }as the time starting from
the worst case, i.e., 
\[
\tau(\epsilon)=\max_{x\in\Omega}\tau_{x}(\epsilon)\qquad\text{mixing time}
\]

The mixing time is related to $\lambda_{1}$ by 
\begin{eqnarray}
\tau_{x}(\epsilon) & \le & \frac{1}{1-\lambda_{1}}\log\left(\frac{1}{\pi(x)\epsilon}\right)\quad\forall x\in\Omega\nonumber \\
\tau(\epsilon) & \ge & \frac{\lambda_{1}}{2(1-\lambda_{1})}\log(1/2\epsilon)\quad.\label{eq:MixingTime}
\end{eqnarray}

\textbf{2. Preliminaries on random walks. }The paths considered in
this paper are taken to be paths in the standard cartesian $xy-$plane.
A \textit{Dyck path} from coordinates $(0,0)$ to $(2n,0$) is a path
with steps $(1,1)$ and $(1,-1)$ that never passes below the $x-$axis.
In other words a Dyck path on $2n$ steps is a walk from left to right
between the coordinates $\left(x_{0},y_{0}\right)=\left(0,0\right)$
and $\left(x_{2n},y_{2n}\right)=\left(2n,0\right)$. And at each step
the $y-$coordinate changes by one $(x_{i+1},y_{i+1})=(x_{i}+1,y_{i}\pm1)$,
and  $y_{i}\ge0$ for all $i$. 
\begin{figure}
\begin{centering}
\includegraphics[scale=0.35]{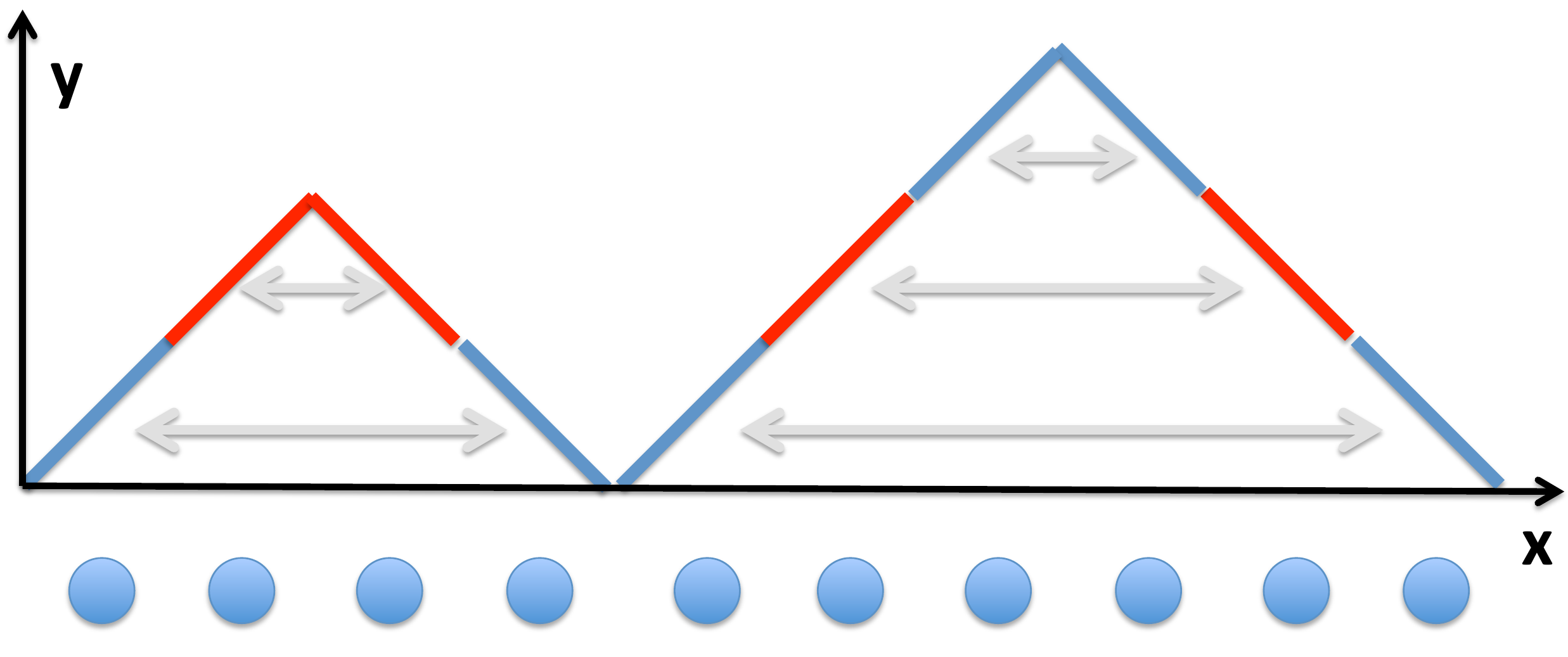}
\par\end{centering}
\centering{}\caption{\label{fig:Dyck}A Dyck walk with $s=2$ colors on a chain of length
$2n=10$.}
\end{figure}
Let $s$ be a positive integer. An $s-$\textit{colored Dyck path}
from $(0,0)$ to $(2n,0$) is a Dyck path whose  steps $(1,1)$ and
$(1,-1)$ each can have $s\ge1$ different colors. In addition, every
step of a given color is uniquely matched with a step down of the
same color. Note that $s=1$ corresponds to the standard Dyck paths
(see Fig. \eqref{fig:Dyck} for an example). 

Below, at times, we denote a step up by $u\equiv(1,1)$ and a step down
by $d\equiv(1,-1)$. When there are colorings, we use $u^{k}$ and
$d^{k}$ to denote a step up and down of color $k$, where $1\le k\le s$. 

A \textit{peak} on the walk is defined to be any coordinate whose
height (i.e., $y-$coordinate) is larger than both of its neighbors.

\textbf{3. Fredkin Markov Chain.} The Markov chain in this paper and
it predecessors \cite{Movassagh2012_brackets,movassagh2016supercritical}
have local transition rules. Such Markov chains are said to have
Glauber dynamics. Let us describe the uncolored Fredkin Markov chain
first as it is more mainstream in the combinatorics community. The
state space is the set of Dyck paths, and the Markov chain has the
local transition rules shown in Fig. \eqref{fig:localMoves-1}. These
moves interchange a peak with a step up or down to its left or right.
It is important to emphasize that any given Dyck walk can be related
to any other by a series of these local moves. Moreover, applying
any sequence of these local moves to any given Dyck results in a sequence
of Dyck walks only. 
\begin{figure}
\begin{centering}
\includegraphics[scale=0.4]{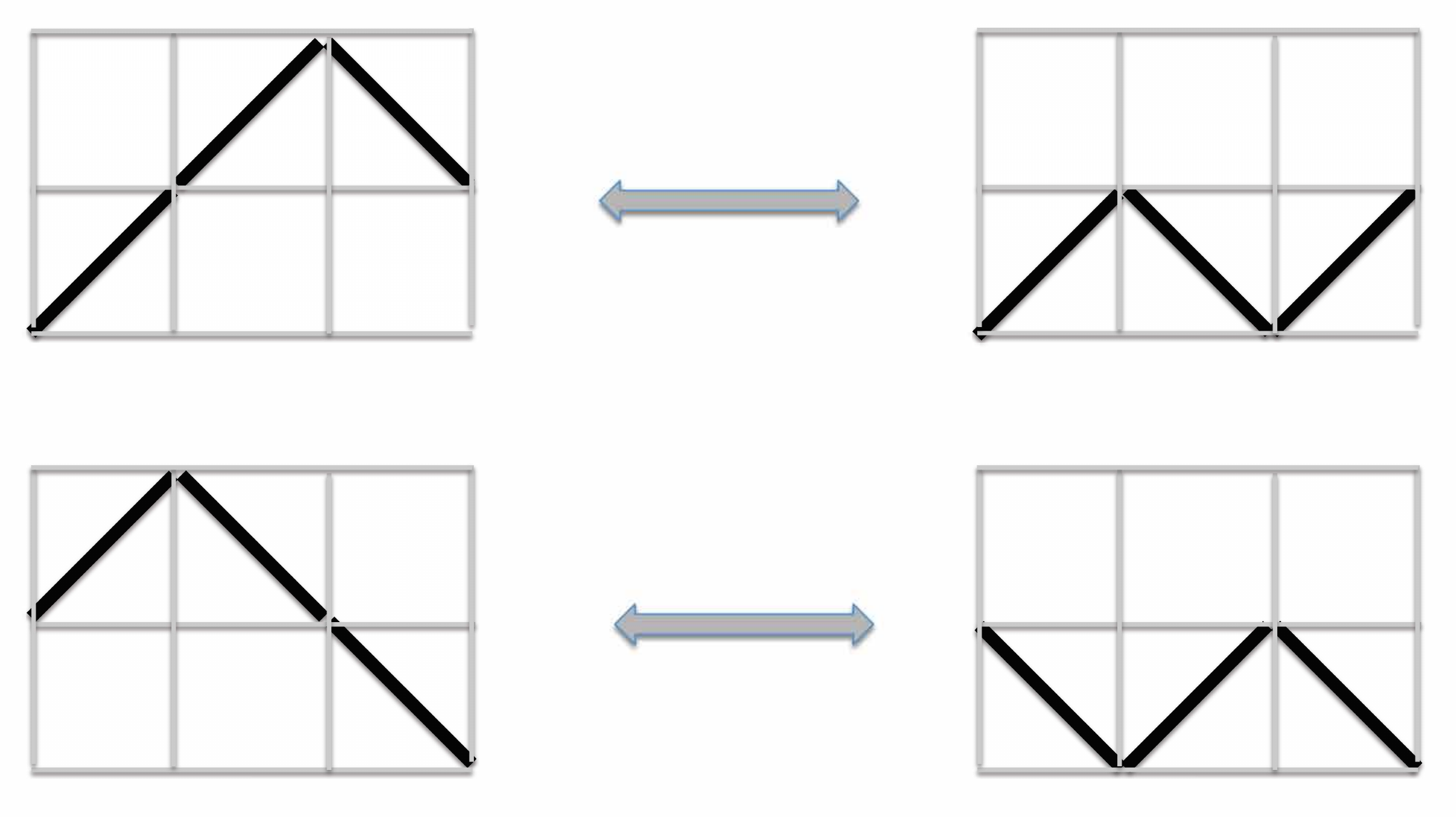}
\par\end{centering}
\centering{}\caption{\label{fig:localMoves-1}Local transition rules for the Fredkin Markov
chain.}
\end{figure}

The $s-$colored version is entirely similar except that any peak
of a \textit{given} color can be interchanged with any step up or
down step of whatever color. Moreover, there is a transition rule
that changes the color of a given peak. Some of these transition rules
are  shown in Fig. \eqref{fig:localMoves_Colored}. 
\begin{figure}
\begin{centering}
\includegraphics[scale=0.4]{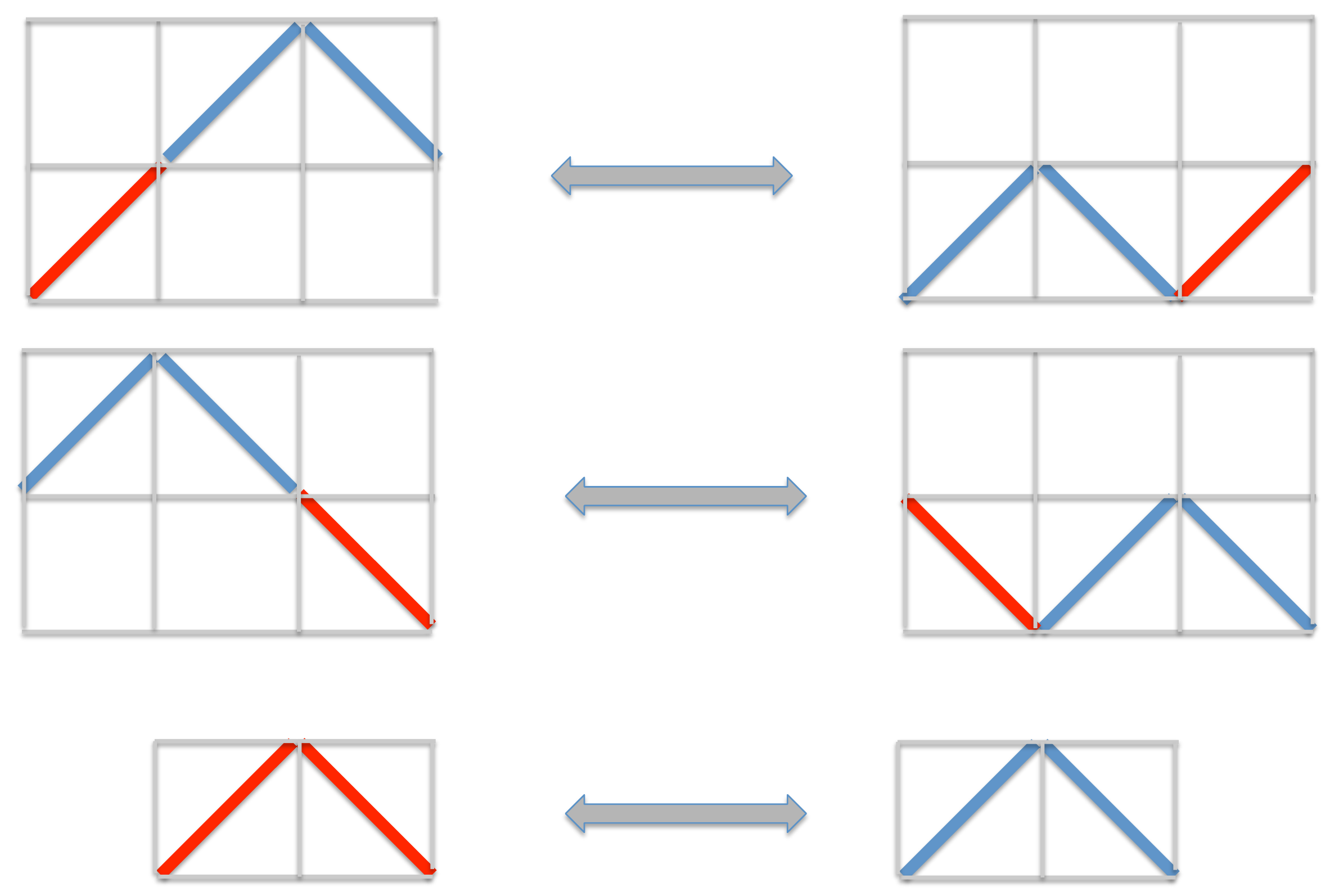}
\par\end{centering}
\centering{}\caption{\label{fig:localMoves_Colored}Local transition rules for the $2-$colored
Fredkin Markov chain. We are only showing exchange of a blue peak
with red steps to convey the transition rules with an understanding
that there are other moves. Namely, a peak of any given color (here
blue or red) can be exchanged with a step down or up of any color
(here blue or red). }
\end{figure}

\textbf{4. Lower-bound on the gap of Fredkin Markov Chain.} We wish
to lower bound the spectral gap of the \textit{Fredkin Markov chain}.
Previously we considered another Markov chain on the space of $s-$colored
Dyck paths, where the transition rules were as follows:
\begin{enumerate}
\item Pick a position between $1$ and $2n-1$ on the Dyck path at random. 
\item If there is a peak there, remove it to get a path of length $2n-2$. 
\item Insert a peak at a random position between $0$ and $2n-2$ with a color
randomly chosen uniformly out of the $s$ possibilities. 
\end{enumerate}
The spectral gap of this Markov chain was proved to be $1-\lambda_{1}=\frac{s}{\sqrt{\pi}n^{11/2}}$
\cite{movassagh2016supercritical}. Let us call this Markov chain \textit{peak-displacing.}

We use the comparison theorem of Diaconis and Saloff-Coste \cite{diaconis1993comparison}
to lower bound the gap of the $s-$colored Fredkin Markov chain in
terms of the known gap of the peak-displacing Markov chain. Let us first
describe the comparison theorem. We follow the presentation of \cite{diaconis1993comparison}.

Suppose $P,\pi$ and $\tilde{P},\tilde{\pi}$ are two reversible Markov
chains that are supported on the same state space $\Omega$. We think
of $P,\pi$ as the chain whose mixing time or gap is unknown and desired,
and think of $\tilde{P},\tilde{\pi}$ as the chain with known eigenvalues.
For each pair of states $x\ne y$ with $\tilde{P}(x,y)>0$, fix a
sequence of steps $x=x_{0},x_{1},x_{2},\dots,x_{k}=y$ with $P(x_{i},x_{i+1})>0$.
This sequence of steps defines a path $\gamma_{x,y}$ on the state
space with the length $|\gamma_{x,y}|=k$. Let the set of edges in
$P$ be $E=\{(x,y);\text{ }P(x,y)>0\}$ and the set of edges in $\tilde{P}$
be $\tilde{E}=\{(x,y);\text{ }\tilde{P}(x,y)>0\}$. Moreover, let
$\tilde{E}(e)=\{(x,y)\in\tilde{E};\text{ }e\in\gamma_{x,y}\}$, where
$e\in E$. In words, $\tilde{E}(e)$ is the set of paths that contain
the edge $e$. 

In our case, $\tilde{P}$ is the peak-displacing Markov chain  and
$P$ is the Fredkin Markov chain. 
\begin{thm*}
(Comparison theorem \cite{diaconis1993comparison}, \cite{randall2000analyzing})
Let $\tilde{P},\tilde{\pi}$ and $P,\pi$ be reversible Markov chains
on a finite set $\Omega$. Then
\[
\left(1-\lambda_{1}(P)\right)\ge\frac{1}{A}\left(1-\lambda_{1}(\tilde{P})\right),
\]
where 
\begin{equation}
A=\max_{z\ne w:P(z,w)\ne0}\left\{ \frac{1}{\pi(z)P(z,w)}\sum_{x\ne y:(z,w)\in\gamma_{x,y}}|\gamma_{x,y}|\tilde{\pi}(x)\tilde{P}(x,y)\right\} .\label{eq:A}
\end{equation}
\end{thm*}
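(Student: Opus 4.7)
The plan is to prove the comparison theorem through the standard Dirichlet form / canonical paths argument. First I would recall the variational characterization of the spectral gap: for any reversible Markov chain $(P,\pi)$ on $\Omega$,
\[
1 - \lambda_{1}(P) \;=\; \min_{f : \mathrm{Var}_{\pi}(f) > 0} \; \frac{\mathcal{E}_{P}(f,f)}{\mathrm{Var}_{\pi}(f)}, \qquad \mathcal{E}_{P}(f,f) = \tfrac{1}{2}\sum_{z,w\in\Omega} \bigl(f(z)-f(w)\bigr)^{2}\,\pi(z)P(z,w).
\]
In our setting $\pi = \tilde{\pi}$ (both are uniform on the set of $s$-colored Dyck paths), so $\mathrm{Var}_{\pi}(f) = \mathrm{Var}_{\tilde{\pi}}(f)$ and it suffices to show $\mathcal{E}_{\tilde{P}}(f,f) \le A \, \mathcal{E}_{P}(f,f)$ for every $f$. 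Plugging this into the variational formulas immediately yields $1-\lambda_{1}(P) \ge A^{-1}(1-\lambda_{1}(\tilde{P}))$.

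Second, I would bound each summand of $\mathcal{E}_{\tilde{P}}$ by a sum over the fixed canonical path $\gamma_{x,y} = (x = x_{0}, x_{1}, \ldots, x_{k} = y)$. Write the telescoping identity $f(x) - f(y) = \sum_{i=0}^{k-1} \bigl(f(x_{i}) - f(x_{i+1})\bigr)$ and apply the Cauchy--Schwarz inequality:
\[
\bigl(f(x) - f(y)\bigr)^{2} \;\le\; |\gamma_{x,y}| \sum_{e = (z,w) \in \gamma_{x,y}} \bigl(f(z) - f(w)\bigr)^{2}.
\]
This replaces a single ``long'' $\tilde{P}$-edge by its constituent ``short'' $P$-edges, at the cost of the path-length factor $|\gamma_{x,y}|$.

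Third, I would substitute the bound into $\mathcal{E}_{\tilde{P}}(f,f)$ and interchange the order of summation, re-indexing by edges $e = (z,w) \in E$ of $P$ instead of pairs $(x,y) \in \tilde{E}$:
\[
\mathcal{E}_{\tilde{P}}(f,f) \;\le\; \tfrac{1}{2}\sum_{e = (z,w)\in E}\bigl(f(z)-f(w)\bigr)^{2} \sum_{(x,y)\in \tilde{E}(e)} |\gamma_{x,y}|\,\tilde{\pi}(x)\tilde{P}(x,y).
\]
To match the form of $\mathcal{E}_{P}(f,f)$, multiply and divide the inner factor by $\pi(z)P(z,w)$. The resulting edge-wise ratio is exactly the quantity inside the braces in the definition of $A$, so its maximum over $(z,w)$ with $P(z,w)\neq 0$ pulls out as $A$, leaving $\mathcal{E}_{\tilde{P}}(f,f) \le A\,\mathcal{E}_{P}(f,f)$.

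I anticipate two places requiring care. The first is the verification that for each $e \in E$ the inner sum is nonempty only along the paths indexed by $\tilde{E}(e)$; this is just bookkeeping from the definition of $\gamma_{x,y}$. The more delicate point — which in a general comparison theorem is the genuine obstacle — is the interplay between $\pi$ and $\tilde{\pi}$ when the two stationary distributions differ (one normally gets an extra factor $\max_{x} \tilde{\pi}(x)/\pi(x)$). Here this is not an issue, since both chains are reversible with respect to the uniform distribution on the same state space, so the variance terms cancel exactly and one gets the clean bound stated.
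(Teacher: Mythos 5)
Your proof is correct and is essentially the standard Dirichlet-form/canonical-path argument of Diaconis and Saloff-Coste; the paper itself does not prove this theorem but quotes it from \cite{diaconis1993comparison,randall2000analyzing}, so there is no internal proof to diverge from. Your closing caveat is well placed: as literally stated for general $\pi\ne\tilde{\pi}$ the bound needs the extra factor $\min_{x}\pi(x)/\tilde{\pi}(x)$ (equivalently, the variances no longer cancel), but in the paper's application both chains are reversible with respect to the same uniform distribution on colored Dyck paths, so your argument establishes the statement exactly as it is used.
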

\begin{rem}
The quantity $A$ depends on the choice of canonical paths $\{\gamma_{x,y}\}$
that we are free to choose. They play an important role similar to
the canonical path technique that was introduced by Jerrum and Sinclair
\cite{Sinclair1992}. However, the notion of paths in the comparison
theorem is fundamentally different because the paths relate two different
Markov chains. 
\end{rem}
Both the peak-displacing and Fredkin Markov chains have the $s-$colored
Dyck paths as their vertex sets. They both have the uniform distribution
of all ($s-$colored) Dyck walks as their stationary distribution.
However, they have different edge sets $E$ and $\tilde{E}$. 
\begin{lem}
The spectral gap of Fredkin Markov chain is upper bounded by 
\begin{equation}
1-\lambda_{1}(P)\ge{O(s n^{-15/2})\text{ }\text{min}_{z\ne w}P(z,w)}.\label{eq:Gap_Markov_Final}
\end{equation}
\end{lem}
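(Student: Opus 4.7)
The plan is to invoke the comparison theorem stated above with $\tilde P$ equal to the peak-displacing chain, whose gap $1 - \lambda_1(\tilde P) = s/(\sqrt{\pi}\,n^{11/2})$ is established in \cite{movassagh2016supercritical}, and $P$ equal to the Fredkin chain; both are reversible with the uniform distribution on $s$-colored Dyck paths as their common stationary measure. It then suffices to upper-bound the congestion quantity $A$ in \eqref{eq:A} by $O(n^{2})/\min_{z\ne w}P(z,w)$, since dividing the known $\tilde P$-gap by this $A$ will yield \eqref{eq:Gap_Markov_Final}.

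The first step is to exhibit, for every peak-displacing edge $(x,y)\in\tilde E$, a canonical path $\gamma_{x,y}$ of Fredkin moves. A single $\tilde P$ transition is specified by a source peak position $i$ with color $c_{1}$ in $x$ and a destination peak position $j$ with color $c_{2}$ in $y$; every other up/down step is shared by $x$ and $y$. I would take $\gamma_{x,y}$ to first recolor the peak at position $i$ from $c_{1}$ to $c_{2}$ via one colored Fredkin move, and then to slide that peak one position at a time from $i$ toward $j$ using the local exchanges $uud\leftrightarrow udu$ and $udd\leftrightarrow dud$ of Fig.~\ref{fig:localMoves-1}, inserting short constant-length detours whenever the moving peak must pass an obstructing peak. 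Each intermediate configuration is a valid colored Dyck path and the path length obeys $|\gamma_{x,y}|=O(n)$ (worsening only to $O(n^{2})$ in pathological cases where many detours accumulate).

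The second step is to bound the congestion $|\tilde E(z,w)|$. A Fredkin edge $(z,w)$ in the sliding phase uniquely pins down the moving peak together with its position $k$ and color in $z$, and everything in $x$ and $y$ outside that peak's trajectory must coincide with $z$. Hence $(x,y)$ is recovered from only three extra parameters $(i,j,c_{1})$: $O(n)$ choices for the source position $i\in[0,k]$, $O(n)$ for the target position $j$, and $s$ for the original color $c_{1}$, giving $|\tilde E(z,w)|=O(sn^{2})$. Color-change Fredkin edges are seen to contribute even less. Substituting $\pi(z)=\tilde\pi(x)=1/|\Omega|$, $\tilde P(x,y)=\Theta\bigl(1/(sn^{2})\bigr)$, together with the above estimates for $|\gamma_{x,y}|$ and $|\tilde E(z,w)|$, into \eqref{eq:A} produces $A=O(n^{2})/\min_{z\ne w}P(z,w)$, and the comparison theorem then yields
\[
1-\lambda_{1}(P)\;\ge\;\frac{1-\lambda_{1}(\tilde P)}{A}\;=\;\Omega\!\bigl(s\,n^{-15/2}\bigr)\min_{z\ne w}P(z,w),
\]
which is \eqref{eq:Gap_Markov_Final}.

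The main technical obstacle is the combined path-length and congestion accounting: one must certify both that the local Fredkin moves can always implement a peak slide (including the short detour subroutines when the sliding peak meets another peak or a boundary) and that the encoding $(z,w)\mapsto(x,y)$ through the parameters $(i,j,c_{1})$ is essentially injective. Any hidden overcounting or blow-up of $|\gamma_{x,y}|$ beyond $O(n^{2})$ would degrade the exponent, so the combinatorial bookkeeping of the detours and of the moving peak's identity is where the delicate work lies.
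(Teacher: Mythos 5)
Your overall approach matches the paper's: invoke the Diaconis--Saloff-Coste comparison theorem with the peak-displacing chain as $\tilde P$ and the Fredkin chain as $P$, construct canonical paths by sliding the displaced peak position by position, and bound the congestion $A$. The final bookkeeping also lands on $A=O(n^2)/\min_{z\ne w}P(z,w)$, which gives the stated lemma.

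There is, however, a conceptual slip in your description of the canonical path. You say the sliding peak needs "short constant-length detours whenever the moving peak must pass an obstructing peak," and you allow the path length to balloon to $O(n^2)$ in "pathological cases." This is not how the Fredkin moves work. The two local moves $u(u^k d^k)\leftrightarrow(u^k d^k)u$ and $d(u^k d^k)\leftrightarrow(u^k d^k)d$ let a peak slide past a step up \emph{and} past a step down in a single move each. Passing an obstructing peak $u^b d^b$ is simply two consecutive ordinary slides (first past $u^b$, then past $d^b$); no detour of any kind is ever needed, and the path length is uniformly bounded by $2n$ plus one recoloring step. In fact, once you delete the detours, your own accounting ($|\tilde E(z,w)|=O(sn^2)$, $|\gamma_{x,y}|=O(n)$, $\tilde P(x,y)=\Theta(1/(sn^2))$) yields $A=O(n)/\min_{z\ne w}P(z,w)$, which would improve the exponent to $n^{-13/2}$; you are only getting $n^{-15/2}$ because you padded the path length by an extra factor of $n$ to hedge against detours that do not occur. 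The paper instead keeps $|\gamma_{x,y}|=O(n)$ but multiplies the congestion by an extra $O(n)$ (the "at most $n$ such peaks" argument about $\mathsf{s_L}$), so the two write-ups inflate different factors yet arrive at the same $A$. Your encoding argument for the congestion (the edge $(z,w)$ pins down the moving peak, and $(x,y)$ is recovered from $(i,j,c_1)$) is the cleaner of the two and is correct; the missing piece is simply to drop the detour idea and state the path length as $O(n)$ outright.
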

\begin{proof}
Since in our case $\pi(x)=\tilde{\pi}(x)=s^{-n}/C_{n}$, where $C_{n}$
is the $n^{\text{th}}$ Catalan number we have
\begin{figure}
\begin{centering}
\includegraphics[scale=0.35]{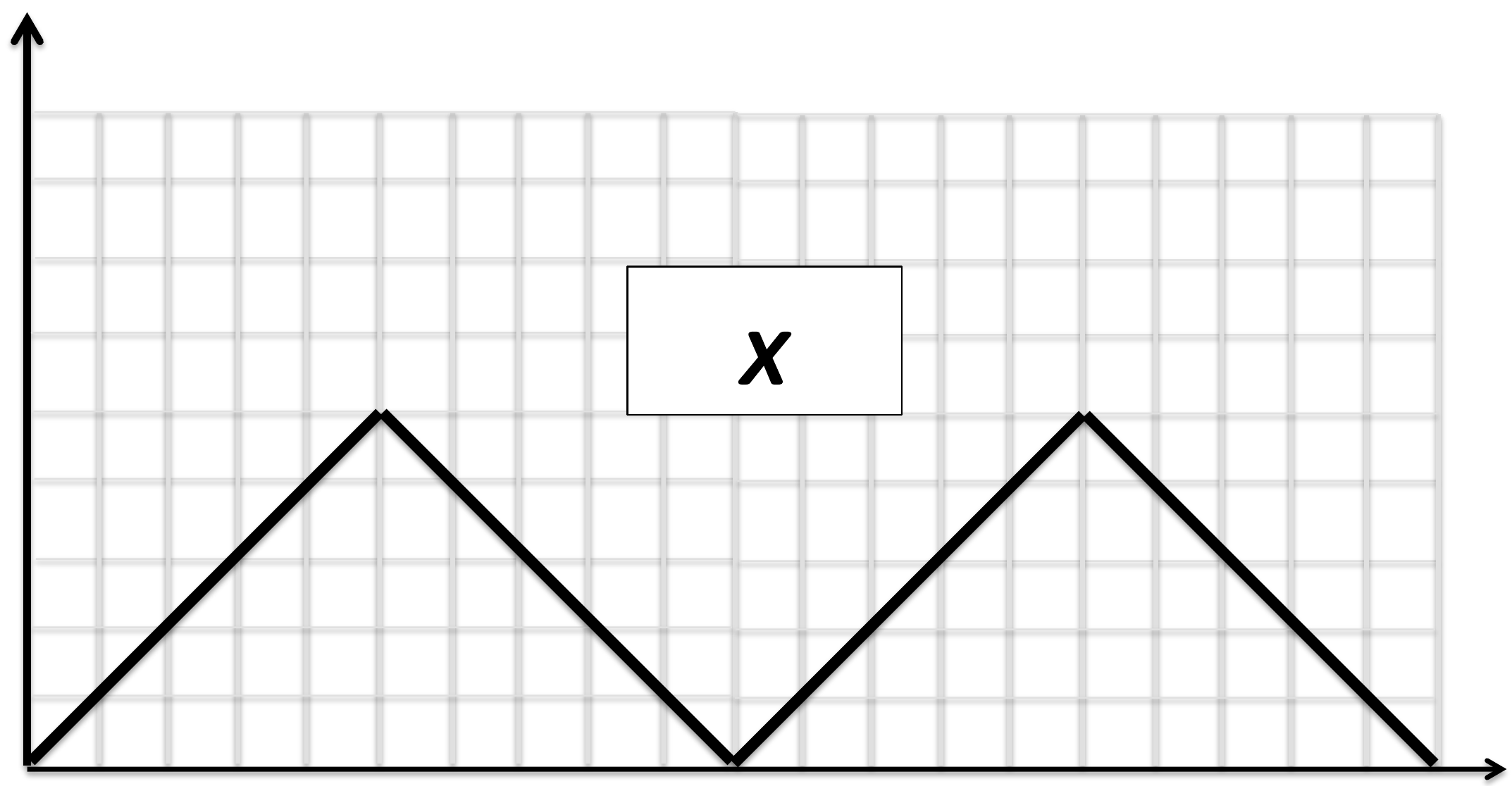}\includegraphics[scale=0.35]{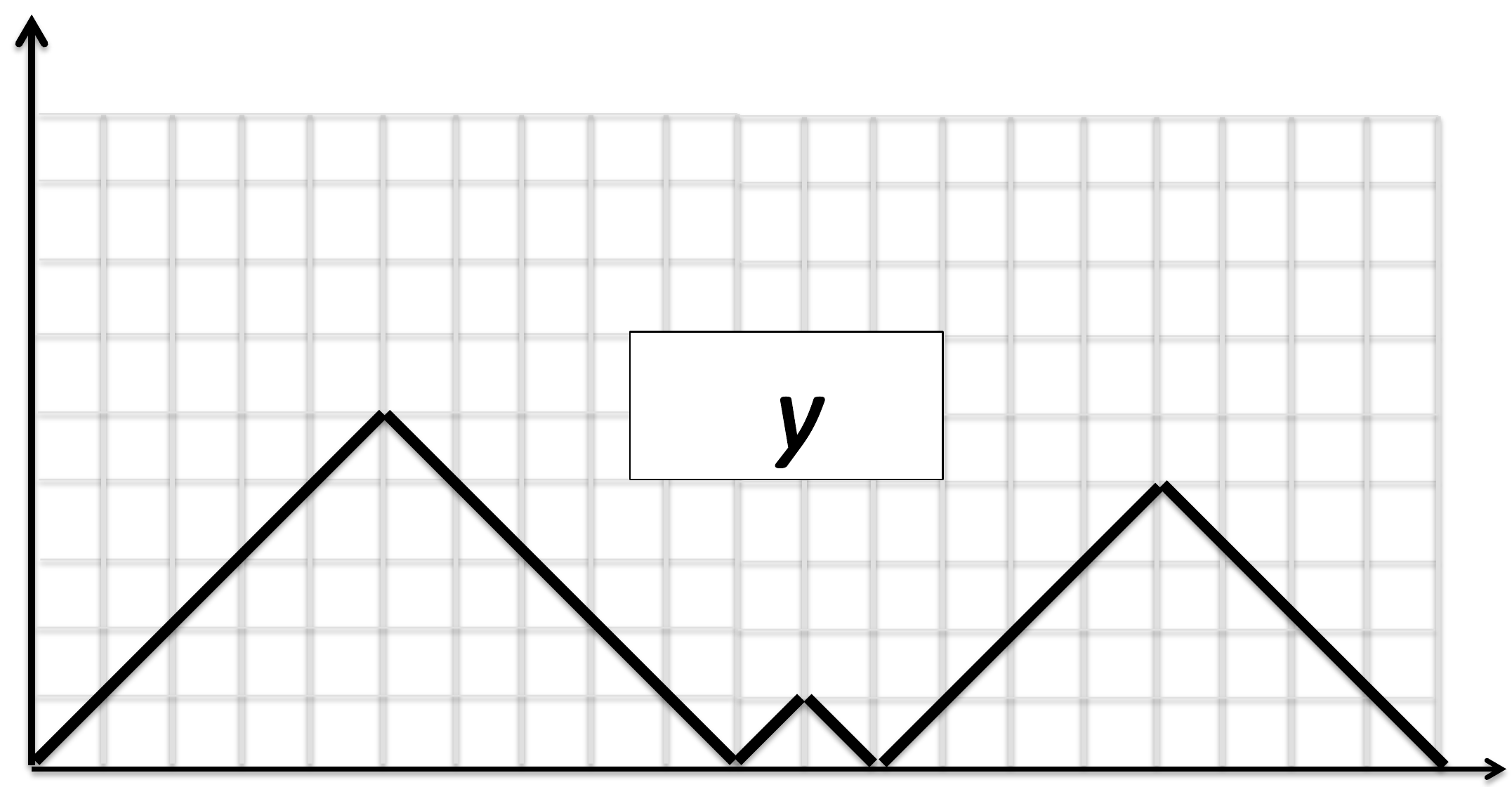}
\par\end{centering}
\caption{\label{fig:Two-adjacent-states}Two adjacent states in peak-displacing
Markov chain.}
\end{figure}
\[
1-\lambda_{1}(P)\ge A^{-1}\left\{ \frac{s}{\sqrt{\pi}n^{11/2}}\right\} .
\]

We now define the set of canonical paths. First take the uncolored
version $s=1$. Let $x$ and $y$ be two Dyck paths of length $2n$,
where $y$ is obtained from $x$ by randomly cutting a peak and inserting
it at a random position. An example of this is shown in Fig. \eqref{fig:Two-adjacent-states}, where the right most peak in $x$ was cut and placed in the valley.
Let the initial position of the peak (in $x$) be between $i$ and
$i+1$ and let its final position be between $j$ and $j+1$ (in $y$).
With no loss of generality take $j>i$ and define $k=j-i$. We define
the canonical path to be $x=x_{0},x_{1},\dots x_{k}=y$, where each
$x_{\ell}$ is a Dyck path that is obtained from $x_{\ell-1}$ by
exchanging the peak with a step to its right. This results in ``walking''
the peak $k$ steps to the right until $x_{k}=y$. Fig. \eqref{fig:The-canonical-path}
shows the canonical path that takes $x$ to $y$, that are adjacent in $P$, shown in Fig. \eqref{fig:Two-adjacent-states}.

Turning to the colored case, suppose the peak to be displaced has
different colors in $x$ and $y$. This can easily be implemented
in the canonical path by adding a single step at the end of the path
which changes the color of the peak at its final position.
\begin{figure}
\begin{centering}
\includegraphics[scale=0.35]{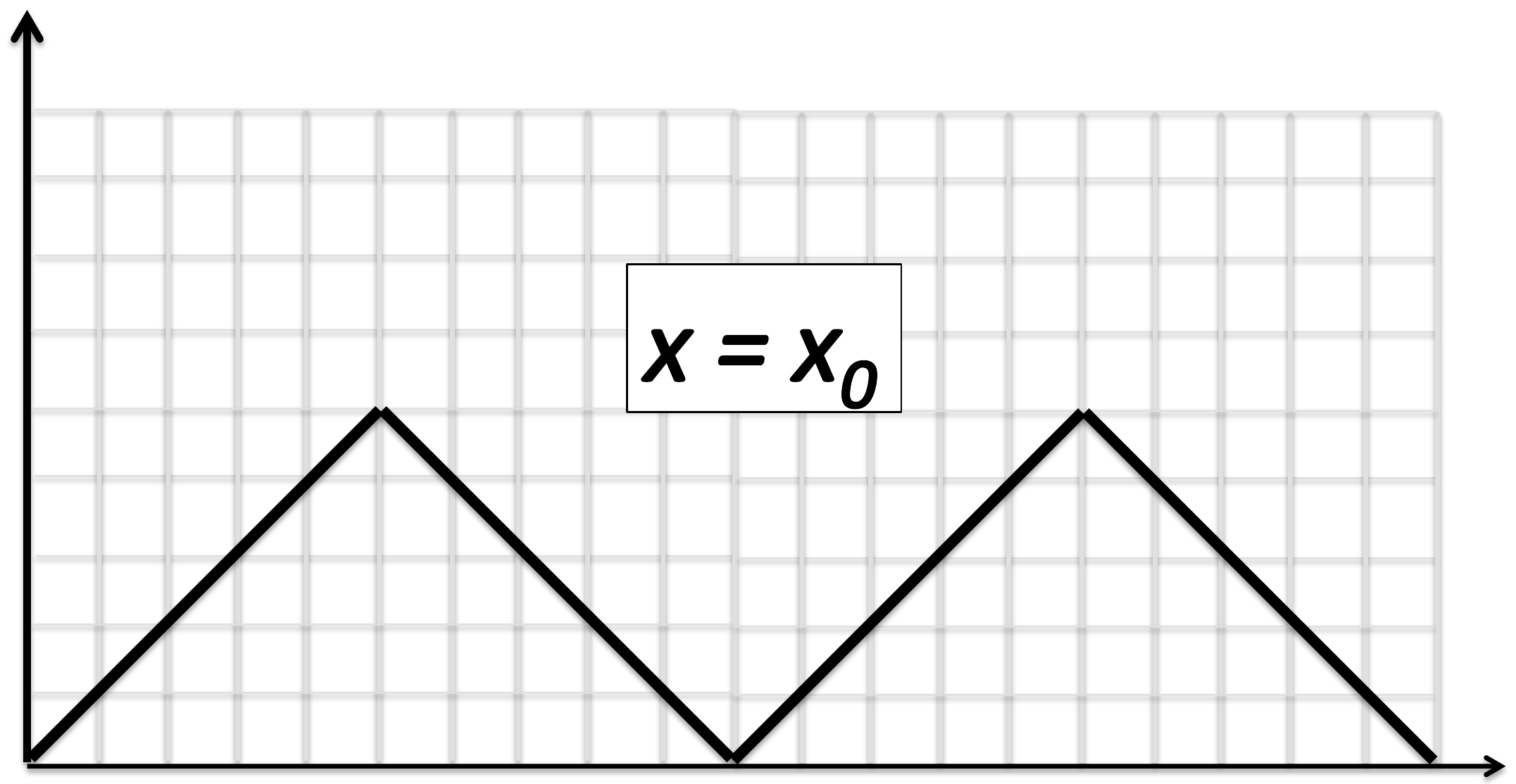}\includegraphics[scale=0.35]{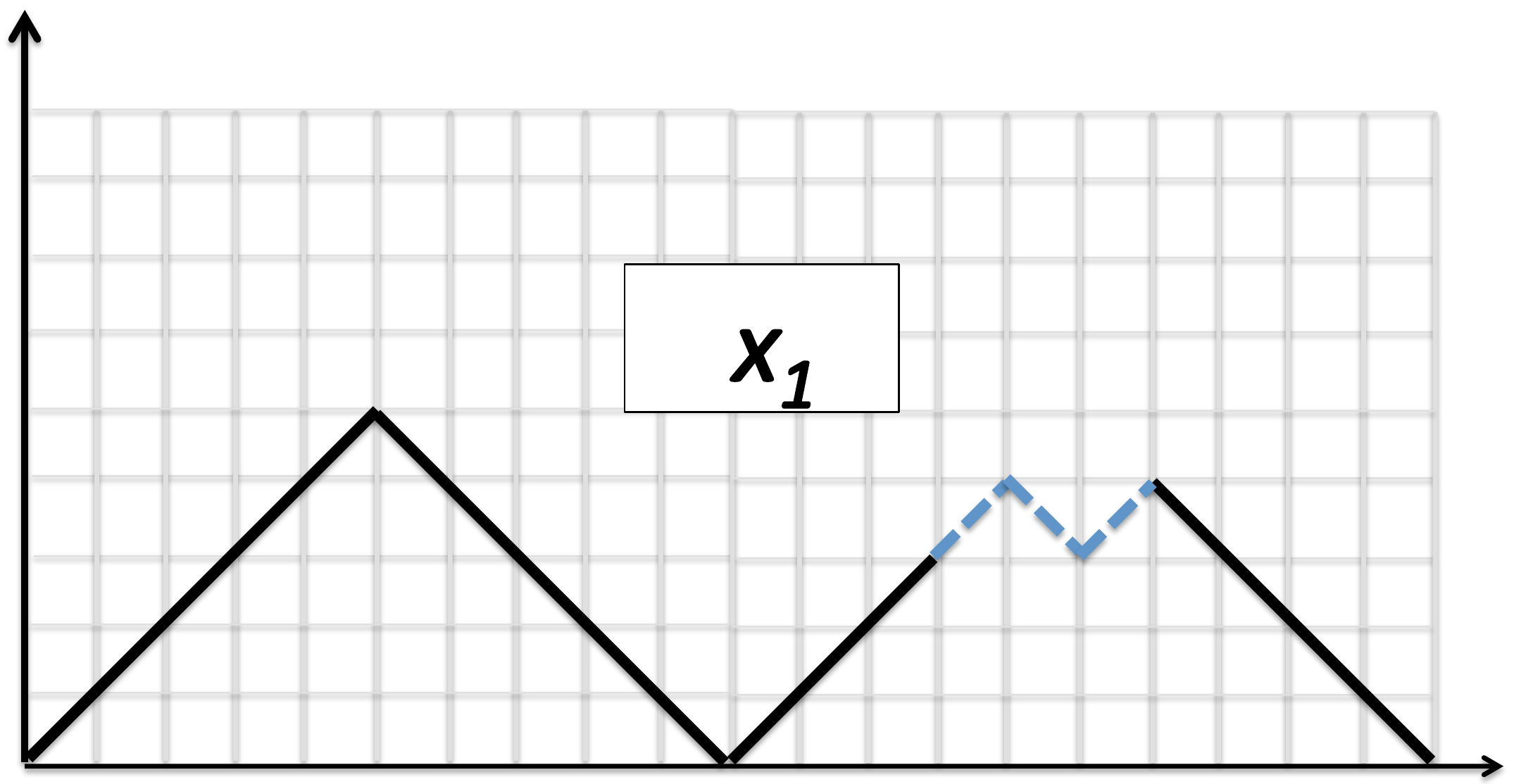}
\par\end{centering}
\begin{centering}
\includegraphics[scale=0.35]{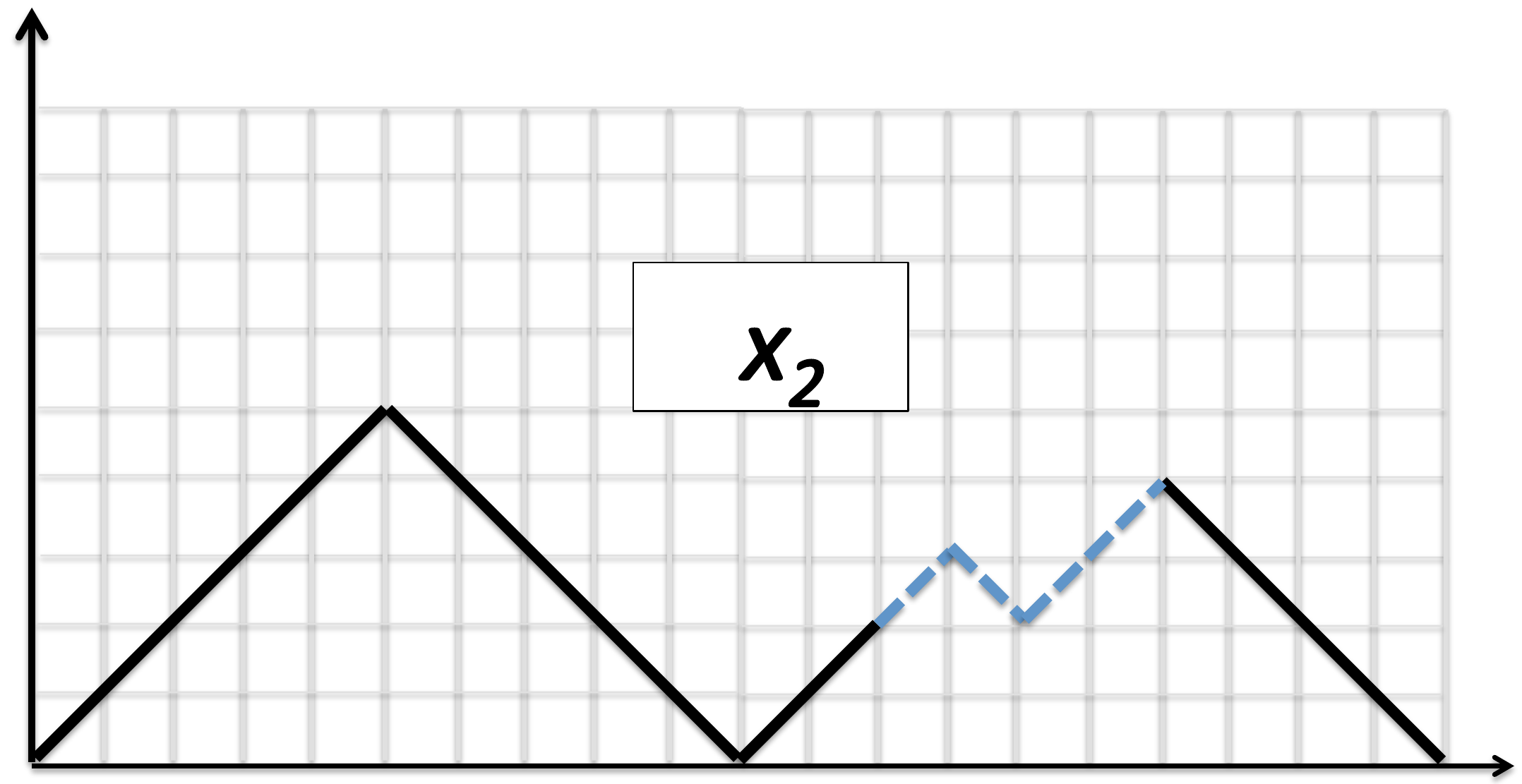}\includegraphics[scale=0.35]{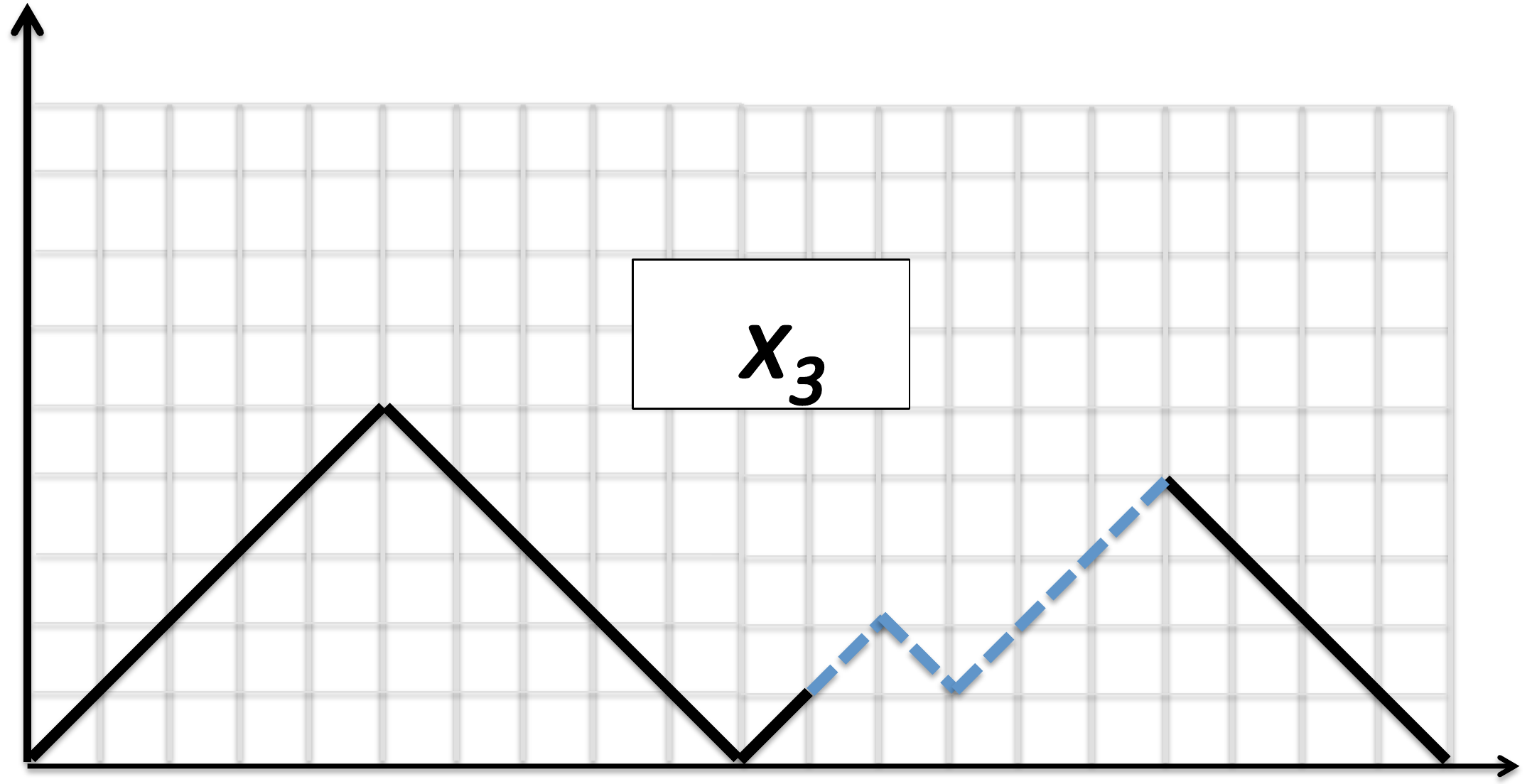}
\par\end{centering}
\begin{centering}
\includegraphics[scale=0.35]{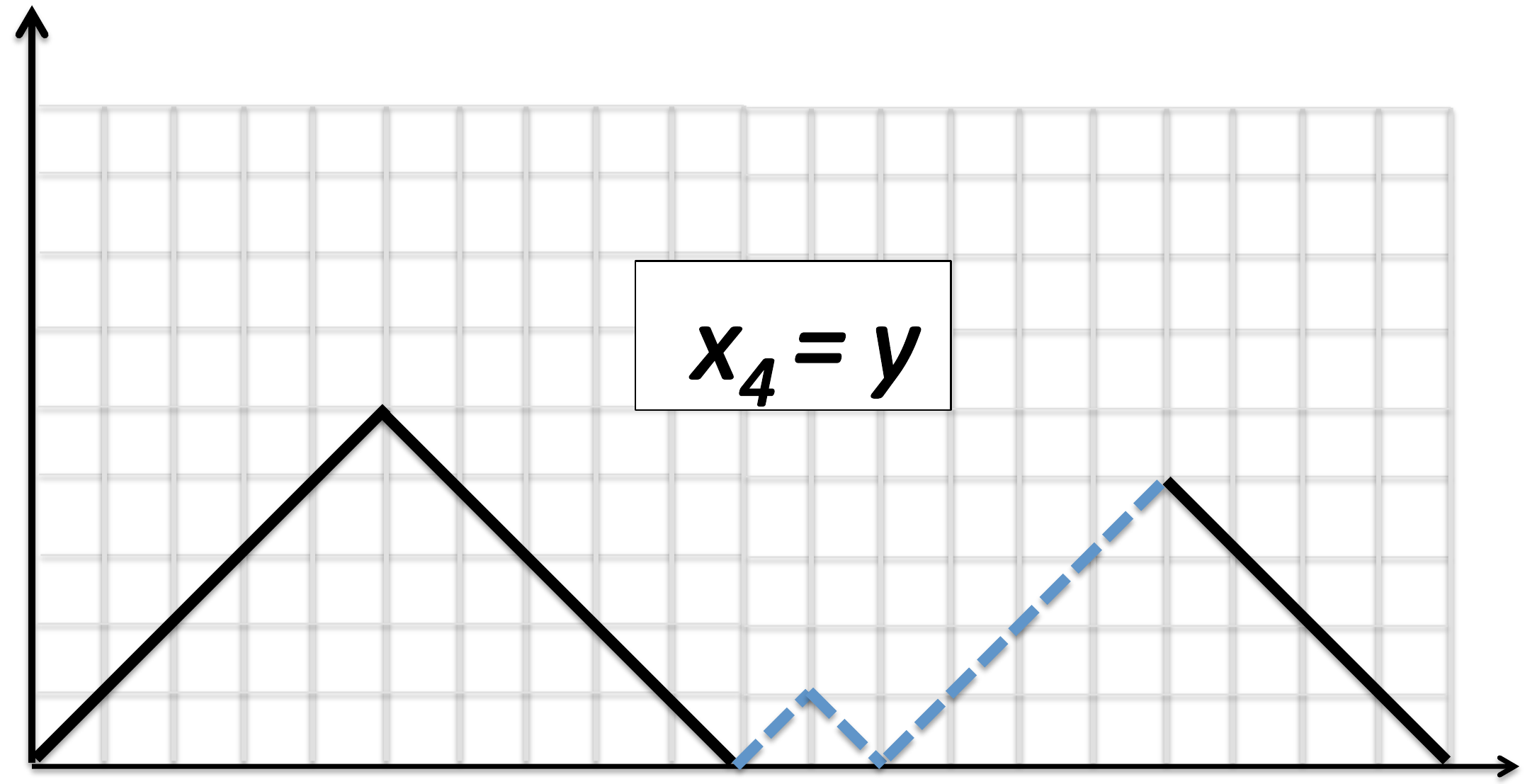}
\par\end{centering}
\caption{\label{fig:The-canonical-path}An example of canonical paths in Fredkin
Markov chain with the start state $x$ and final state $y$.}
\end{figure}
It remains to upper bound the quantity $A$. It is clear that the
length of any canonical path is at most $2n$. Since $\pi=\tilde{\pi}$,
and choosing a peak at random and inserting at a random position on
a chain of length $2n$ with $s$ possible choices of random coloring
corresponds to $\tilde{P}(x,y)\approx\frac{1}{s\left(2n\right)^{2}}$, Eq. \eqref{eq:A} is bounded by 
\[
A\le\max_{z\ne w:P(z,w)\ne0}\left\{ \frac{2n}{P(z,w)}\sum_{x\ne y:(z,w)\in\gamma_{x,y}}\tilde{P}(x,y)\right\} \le\frac{1}{2ns}\max_{z\ne w:P(z,w)\ne0}\frac{\sum_{x\ne y:(z,w)\in\gamma_{x,y}}1}{P(z,w)}
\]
The final step is to upper-bound the number of canonical paths going
through any edge of Fredkin Markov chain. Suppose $(z,w)$ is an edge
in the a canonical path that maximizes $A$. It is clear that
$z$ and $w$ are equal everywhere except from three consecutive positions
corresponding to exchange of a a peak with a single step. 

In any canonical
path that has $(z,w)$ as an edge, the peak could have started its
motion from, at most, $2n$ places on the path and similarly can terminate
its motion in at most $2n$ positions. This gives us $O(n^{2})$ paths that use the edge with the maximum edge load.  

Moreover, there can be at most $n$ such peaks whose motion would result in the same edge $(z,w)$ as we now show. Suppose the edge is the local move that connects $\mathsf{s_L}\mbox{ } udu  \mbox{ }\mathsf{s_R}$ to $\mathsf{s_L}\mbox{ } uud  \mbox{ }\mathsf{s_R}$, where $d$ and $u$ denote a step down and up, respectively, and  $\mathsf{s_L}$ and $\mathsf{s_R}$ denote a string on the left and a string on the right of the local move, respectively. In this configuration $\mathsf{s_L}$ and $\mathsf{s_R}$ count once as far as initial and final configurations are concerned; however, they can contribute $O(n)$ times in the number of paths that use this local move. This can be seen if we take, as an example,  $\mathsf{s_L}=udud...ud$; any of the peaks $ud$ could have started walking, leading to the local move. There can be $O(n)$ choices of such peaks in $\mathsf{s_L}$, all leading to the particular transition described just above.

Therefore, the total number of canonical paths using the maximum edge load is $O(n^{3})$. In the case that the peak can change
color at the last step, this upper bound becomes $O(n^{3}s)$. With
this we arrive at the upper-bound on $A$ 
\[
A\le\frac{O(n^{2})}{\text{min}_{z\ne w}P(z,w)}.
\]
Lastly we have the desired bound on the gap
\begin{equation}
1-\lambda_{1}(P)\ge\frac{s\text{ }\text{min}_{z\ne w}P(z,w)}{O(n^{15/2})}.\label{eq:Gap_Markov_Final}
\end{equation} \end{proof}
The gap gives us an upper-bound on the mixing time of the Fredkin
Markov chain via Eq. \eqref{eq:MixingTime}.

If we take $P(z,w)=1/2$, then we have a polynomial upper (lower)-bound
on the spectral gap (mixing time) of Fredkin Markov chain. In Section
\ref{subsec:Korepin} we define a suitable Markov chain in which $P(z,w)\propto1/n$,
which also yields a polynomial upper bound.

\textbf{5. An interesting failed attempt: Wilson's lattice path Markov
chain.} In \cite{wilson2004mixing}, David B. Wilson defines a \textit{lattice
path} from $(0,0)$ to $(2n,0$) to be a connected path with steps
$(1,1)$ and $(1,-1)$. In other words, a lattice path on $2n$ steps
is a walk from left to right between the coordinates $\left(x_{0},y_{0}\right)=\left(0,0\right)$
and $\left(x_{2n},y_{2n}\right)=\left(2n,0\right)$. And at each step
the $y-$coordinate changes by one $(x_{i+1},y_{i+1})=(x_{i}+1,y_{i}\pm1)$.
See Fig. \eqref{fig:A-lattice-path}. In particular, such lattice paths
can go below the $x-$axis 
\begin{figure}
\centering{}\includegraphics[scale=0.4]{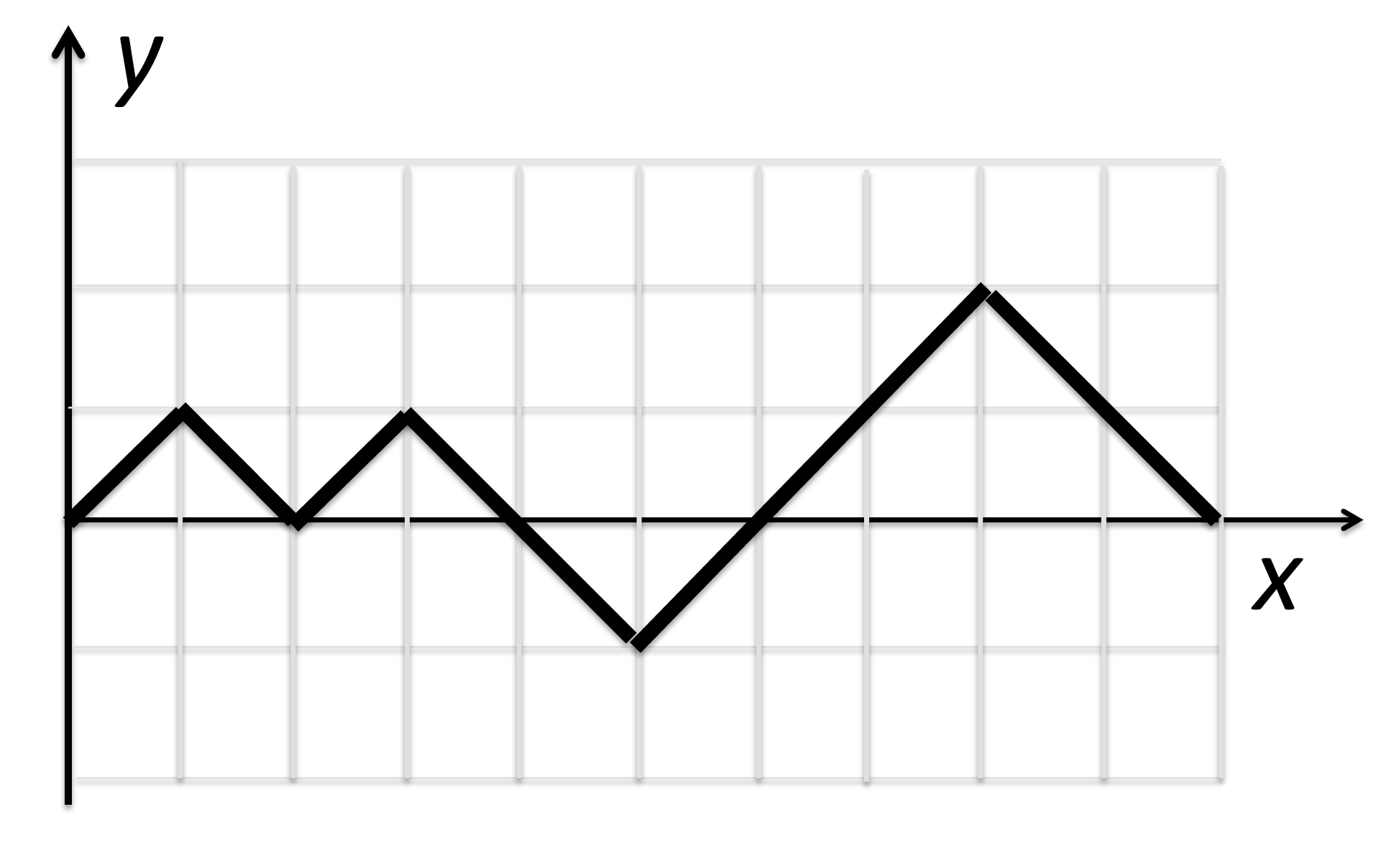}\caption{\label{fig:A-lattice-path}A lattice path through $5\times5$ square.}
\end{figure}

The \textit{lattice path Markov} \textit{chain} is local and has the
lattice paths as its state space. The transition rules are as follows:
Randomly pick two adjacent steps on the lattice path and if they correspond
to a peak then randomly change it to a valley and vice versa. The
mixing time of \textit{lattice path Markov} chain \cite{wilson2004mixing},
was proven by D.B. Wilson to be $\Theta(n^{3}\log n)$ \cite{wilson2004mixing}. It is noteworthy that this scaling is exact.

Let \textit{positive lattice path Markov} \textit{chain} be a local Markov chain, 
that has the Dyck paths as its state space. The transition rules are
the same as the lattice path Markov chain with the additional condition
that if changing a peak to a valley results in the walk crossing the
$x-$axis to become negative then the chain will skip that move in
that time step (i.e., the chain sits idle). The mixing time that
Wilson obtained also upper bounds the mixing time of the positive
lattice path Markov chain:
\begin{lem}
\label{lem:The-mixing-timeUpperBound}The mixing time of positive
lattice path Markov chain is upper bounded by $O(n^{3}\log n)$.
\end{lem}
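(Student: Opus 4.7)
The plan is to leverage Wilson's $\Theta(n^{3}\log n)$ bound on the mixing time of the unrestricted lattice path Markov chain via a monotone coupling argument. The key point is that the positive chain differs from Wilson's chain only by rejecting moves that would send the walk below the $x$-axis, so it is a \emph{state-dependent censored} version of Wilson's chain on the Dyck-path sublattice.

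First I would set up the natural monotone coupling for the positive lattice path Markov chain. Viewing Dyck paths as height functions $X,Y:\{0,1,\dots,2n\}\to\mathbb{Z}_{\ge 0}$, I couple two copies $X_t,Y_t$ of the chain by letting them pick the same pair of adjacent positions $i$ at each step and attempt the same local peak/valley swap, each subject to its own positivity constraint. A short case analysis on the local configurations at $i$ in both chains—in particular the subcase where the positivity constraint rejects the flip in the lower chain only (which can be ruled out by using the order $X_t\ge Y_t$ together with parity of heights)—shows that the pointwise order is preserved: if $X_0(j)\ge Y_0(j)$ for all $j$, then $X_t(j)\ge Y_t(j)$ for all $t,j$.

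Next I would apply this coupling to the two extremal Dyck paths: the tent $u^{n}d^{n}$ (the maximum in the pointwise order) and the zigzag $(ud)^{n}$ (the minimum). Any other Dyck-path start is sandwiched between these two, so the coalescence time $T_c=\min\{t:X_t=Y_t\}$ started from the extremal pair upper bounds the mixing time of the positive chain by the standard coupling inequality $\tau(\epsilon)=O\!\bigl(E[T_c]\log(1/\epsilon)\bigr)$. To bound $E[T_c]$, one compares to Wilson's chain via the Peres--Winkler censoring inequality: Wilson's lattice-path chain is monotone under the same pointwise order, and censoring moves that violate positivity can only accelerate convergence to the uniform measure on Dyck paths. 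Wilson's $\Theta(n^{3}\log n)$ bound then transfers, giving the claimed $O(n^{3}\log n)$ mixing time.

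The main obstacle is the last step: confirming that the state-dependent censoring (as opposed to a deterministic censoring schedule) is actually covered by a rigorous form of the censoring inequality. If this invocation is not clean, the fallback is a direct pathwise estimate on $E[T_c]$ using the coupled chains above: the total area $A_t=\sum_{j}\bigl(X_t(j)-Y_t(j)\bigr)$ is a non-negative integer of initial size $O(n^{2})$, it is non-increasing under the monotone coupling, and one shows that conditioned on $X_t\ne Y_t$ the expected per-step decrement is $\Omega(1/n)$ (essentially because any mismatched peak of $X_t$ above $Y_t$ has a constant chance of being selected and flipped). A standard martingale/concentration argument then gives $T_c=O(n^{3}\log n)$ with high probability, recovering the bound without invoking Wilson's theorem as a black box.
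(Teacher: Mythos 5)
Your route is genuinely different from the paper's. The paper's proof is a one-liner: it observes that the positive lattice path chain lives on the Dyck-path subset of Wilson's state space and invokes a spectral-gap comparison for chains induced on a subset (``Aldous' theorem,'' cf.\ Levin--Peres--Wilmer, Thm.~13.20), concluding $\Delta_{\mathrm{Dyck}}\ge\Delta_{\mathrm{Wilson}}$ and hence the mixing bound, with no coupling at all. Your first two steps are sound---the floor constraint is compatible with the pointwise order on height functions, so the monotone coupling exists, and $u^n d^n$ and $(ud)^n$ are the extremal Dyck paths---but both your primary comparison step and your fallback have genuine gaps.

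The Peres--Winkler invocation fails for two independent reasons. First, the inequality points the wrong way: the censoring theorem says that in a monotone Glauber system started from an extremal state, censoring updates produces a distribution stochastically above (resp.\ below) the uncensored one and \emph{farther} from equilibrium in total variation; censoring can only slow down mixing, not ``accelerate convergence.'' Second, and more fundamentally, that theorem concerns a single dynamics with a single target measure $\pi$, with the censored chain still converging to the same $\pi$. Here the positive chain converges to uniform on Dyck paths while Wilson's chain converges to uniform on all lattice paths, so the positive chain is not a ``censored version of Wilson's chain'' in the sense the theorem covers; the state-dependent-vs.-scheduled issue you flag is real, but the stationary-measure mismatch is the deeper obstruction.

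The fallback drift argument is also broken. The area $A_t=\sum_j\bigl(X_t(j)-Y_t(j)\bigr)$ is only a supermartingale, not pathwise non-increasing (a raise update at a valley of $X$ not shared by $Y$ increases $A_t$ by $2$). More seriously, the claimed $\Omega(1/n)$ per-step decrement is false: a direct computation for the standard monotone update (pick $i$, pick up/down, reject if it goes negative) gives
\[
\mathbb{E}\bigl[\Delta A\mid X_t,Y_t\bigr]=\frac{1}{2n-1}\Bigl(\#\{\text{height-}1\text{ peaks of }X_t\}-\#\{\text{height-}1\text{ peaks of }Y_t\}\Bigr)\le 0,
\]
the inequality holding because a height-$1$ peak of $X$ at $i$ forces $X$ and $Y$ to agree on $\{i-1,i,i+1\}$. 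The drift therefore comes entirely from rejected down-moves at the floor and vanishes whenever neither path has a height-$1$ peak; e.g.\ for $n$ even, $X=u^n d^n$ and $Y=(uudd)^{n/2}$ are distinct ordered Dyck paths with $\mathbb{E}[\Delta A]=0$. So conditioning on $X_t\ne Y_t$ does not yield a uniform negative drift, and a bare potential argument on $A_t$ cannot produce a polynomial coalescence bound. One would need a Wilson-style weighted area $\sum_j \sin(\pi j/2n)\bigl(X_t(j)-Y_t(j)\bigr)$, and it is not obvious (and would need to be checked) that the floor constraint preserves the sign and size of that weighted drift.
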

\begin{proof}
The positive lattice path Markov chain has the space of Dyck walks
as its possible states, which is a subset of all possible lattice
paths. D. J. Aldous' Theorem says that if $X_{t}$ is a reversible
Markov chain on the state space $\Omega$ with the stationary measure
$\pi$ and spectral gap $\Delta$ and $B\subset\Omega$ is a non-empty
subset, then the gap of the chain induced on $B$ is $\Delta_{B}\ge\Delta$
(see also \cite[Theorem 13.20]{levin2009markov}). Therefore, the
spectral gap of lattice path Markov chain lower-bounds the gap of
positive lattice path chains. 
\end{proof}
So it seems quite natural that one would apply the comparison technique
to positive lattice path Markov chain and leverage Wilson's tight
bound to ultimately lower bound the spectral gap of the Fredkin Markov
chain. 
\begin{figure}
\centering{}\includegraphics[scale=0.4]{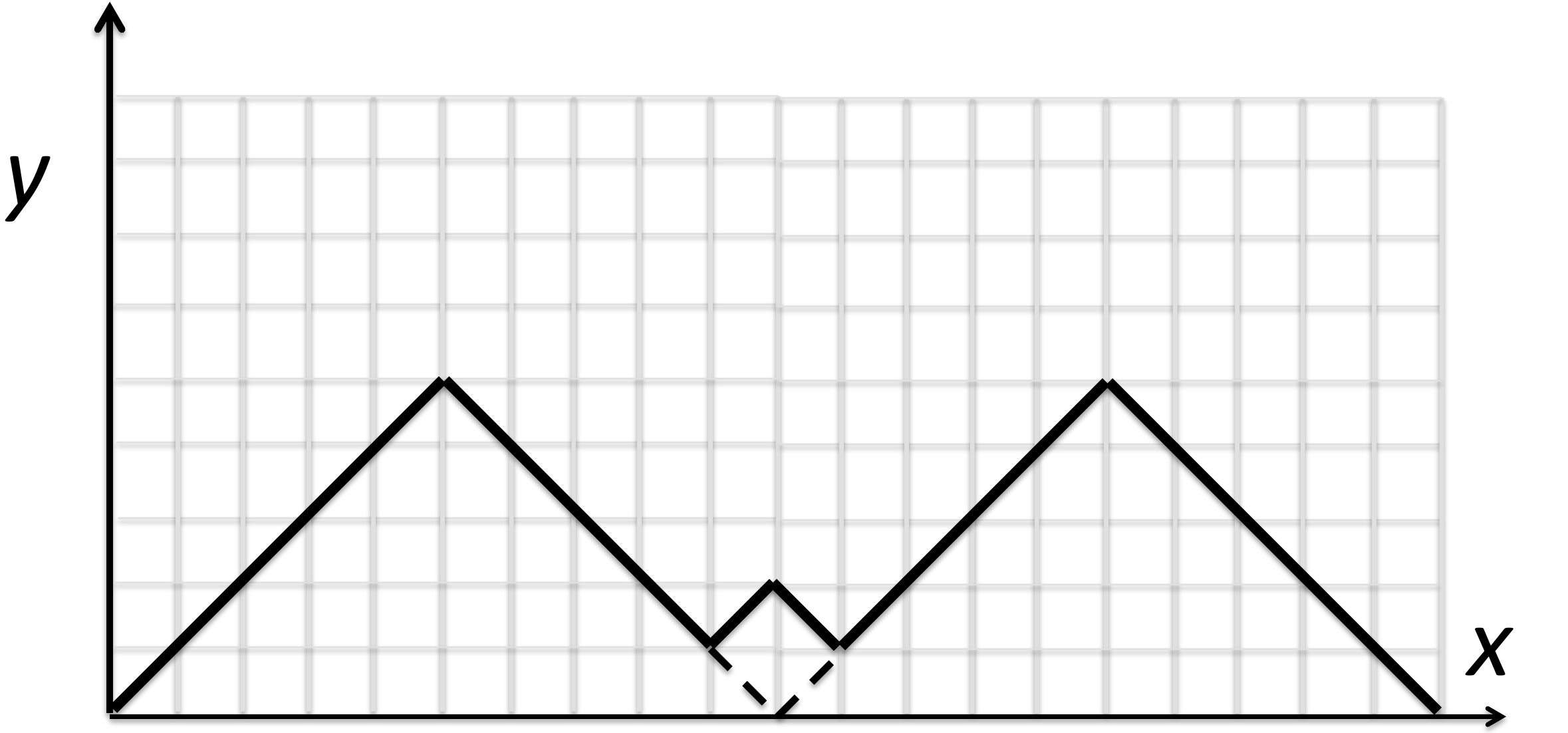}\caption{\label{fig:LongestPath}The local move between two adjacent states
in the lattice path Markov chain with no corresponding adjacent moves
in the Fredkin chain.}
\end{figure}
Although the Fredkin Markov chain is similar to the positive lattice
path Markov chain, it has a crucial difference. 

For every Fredkin adjacent move there is a corresponding adjacent move in positive lattice
paths. That is, there is a flipping of a peak or valley that relates
any two adjacent states under Fredkin Markov chain. In Fig. \eqref{fig:localMoves-1} imagine flipping peaks to valleys and vice versa, then the resulting local moves coincide with Fredkin moves. 

However, the converse is not true; there are adjacent states in the positive Markov chain
that do not have equivalent adjacent states under the Fredkin Markov
chain. In particular, any valley with side lengths of at least $2$
 has an adjacent state with a single peak at the bottom of
the dale under positive Markov chain moves. But the Fredkin Markov chain
cannot relate them with a single move. See Fig. \eqref{fig:LongestPath}
for an example of two adjacent states under a positive lattice path
Markov chain with no single Fredkin local move that would relate them.

Although we could find paths of length $O(n)$ to relate the worst case scenario shown in Fig. \eqref{fig:LongestPath}, the bounds on $A$ ended up being exponentially large in $n$.  This was because exponentially many canonical paths would end up needing to use a single edge. Therefore, in Section \ref{subsec:Korepin} we apply the comparison technique using a different Markov chain that gives us the desired result. 

It will be very exciting if, like the lattice path Markov chain, an exact bound is obtained for the Fredkin Markov chain. We leave this problem for future work.
\section{\label{sec:Hamiltonian-Gap-of}Hamiltonian Gap of Recent Exactly
Solvable Models}
\subsection{\label{subsec:Mapping_Markov_SpinChain}Mapping quantum spin Hamiltonians
to classical Markov chains}
There is a fascinating bridge between certain classical and quantum
 statistical mechanical systems. This connection goes beyond the well-known
equivalence of certain $D$ dimensional quantum systems and $(D+1)$
dimensional classical systems \cite{polyakov1987gauge}. Suppose we
have a quantum Hamiltonian 
\begin{equation}
H=\sum_{j}H_{j}.\label{eq:H_general}
\end{equation}
\begin{defn}
$H$ is called $k-$local if each $H_{j}$ acts non-trivially only
on a constant number ($k)$ of particles (e.g, spins). For example
a spin chain with  nearest neighbors interaction is a $2-$local
Hamiltonian.
\end{defn}
\begin{defn}
$H$ is \textit{Frustration Free (FF)} if the ground state of $H$,
which is the eigenvector corresponding to the lowest eigenvalue (ground
state energy) of $H$, is also the ground state of every $H_{j}$. 
\end{defn}
\begin{defn}
\label{Def:Stoch}$H$ is called \textit{Stoquastic} with respect
to a basis ${\cal B}$ if each $H_{j}$ has non-positive off-diagonal
entries with respect to ${\cal B}$ \cite{bravyi2006complexity}. 
\end{defn}
\begin{defn}
The gap of a quantum Hamiltonian $H$ is the positive difference of
its two {\it smallest} eigenvalues and is denoted by $\Delta(H)$. 
\end{defn}
Comment: The definition of the gap is different from the definition
in the Markov chain literature, where the gap is the difference of
the two \textit{largest} eigenvalues.

Comment: The usage of Perron-Frobenius theorem in the context of quantum spin systems has a long history. In particular the non-positivity was utilized in Marshall's paper \cite{marshall1955antiferromagnetism} and Mattis-Lieb theory \cite{lieb1962ordering} as well as more recent  works \cite{aizenman1994geometric}. In particular, the stability of the gap along a path that connects the Hamiltonian to an exactly solvable Hamiltonian is a result of positivity. 

The term stoquastic is meant to emphasize the connection between classical
stochastic matrices and quantum Hamiltonians. Stoquastic Hamiltonians
avoid the so called sign problem and their ground state properties
can be simulated using classical Monte Carlo algorithms \cite[and references therein.]{bravyi2009complexity}.
With no loss of generality the smallest eigenvalue of any FF Hamiltonian
is taken to be zero and we have $H|\psi\rangle=0$, where $|\psi\rangle$
is the ground state of $H$ and FF condition implies that $H_{j}|\psi\rangle=0$
for all $j$. 
\begin{rem}
If we multiply the generator or generating matrix of continuous time
Markov chains by minus one, the resulting matrix would be a stoquastic
matrix. However, stoquastic matrices may have diagonal entries of
any sign. 
\end{rem}
Using the Perron-Frobenius theorem, Bravyi and Terhal showed that the
ground state $|\psi\rangle$ of any stoquastic FF Hamiltonian can
be chosen to be a vector with non-negative amplitudes on a standard
basis \cite{bravyi2009complexity}. Suppose $H$ is restricted to the ground subspace. Following \cite{bravyi2009complexity}
let us define the transition matrix 
\begin{equation}
P(x,y)=\delta_{x,y}-\beta\sqrt{\frac{\pi(y)}{\pi(x)}}\langle x|H|y\rangle\label{eq:P_General}
\end{equation}
where $\beta>0$ is real and chosen such that $P(x,y)\ge0$ and $\pi(x)=\langle x|\psi\rangle^{2}$
is the stationary distribution. The eigenvalue equation $(\mathbb{I}-\beta H)|\psi\rangle=|\psi\rangle$
implies that $\sum_{y}P(x,y)=1$. Therefore the matrix $P$ really
defines a random walk. 

Recall that the spectral gap of a Markov chain is the difference of
its two largest eigenvalues, i.e. $1-\lambda_{1}(P)$. Because of
the minus sign in Eq. \eqref{eq:P_General} the energy gap of the FF
Hamiltonian is related to the gap of the Markov chain by
\[
\Delta(H)=\frac{\lambda_{1}(P)}{\beta}.
\]

This provides a powerful tool for quantifying the gap of a class of
quantum local Hamiltonians using the arsenal of existing techniques
for proving the gap of classical Markov chains. 

We will use this connection along with the scaling of the gap proved
for the Fredkin Markov chain in the previous section to prove the
gap of the Fredkin \textit{quantum spin} chain model \cite{salberger2016fredkin,dell2016violation}. 

\subsection{\label{subsec:Movassagh_shor}The Motzkin spin chain and its ground
state}
\begin{figure}
\begin{centering}
\includegraphics[scale=0.3]{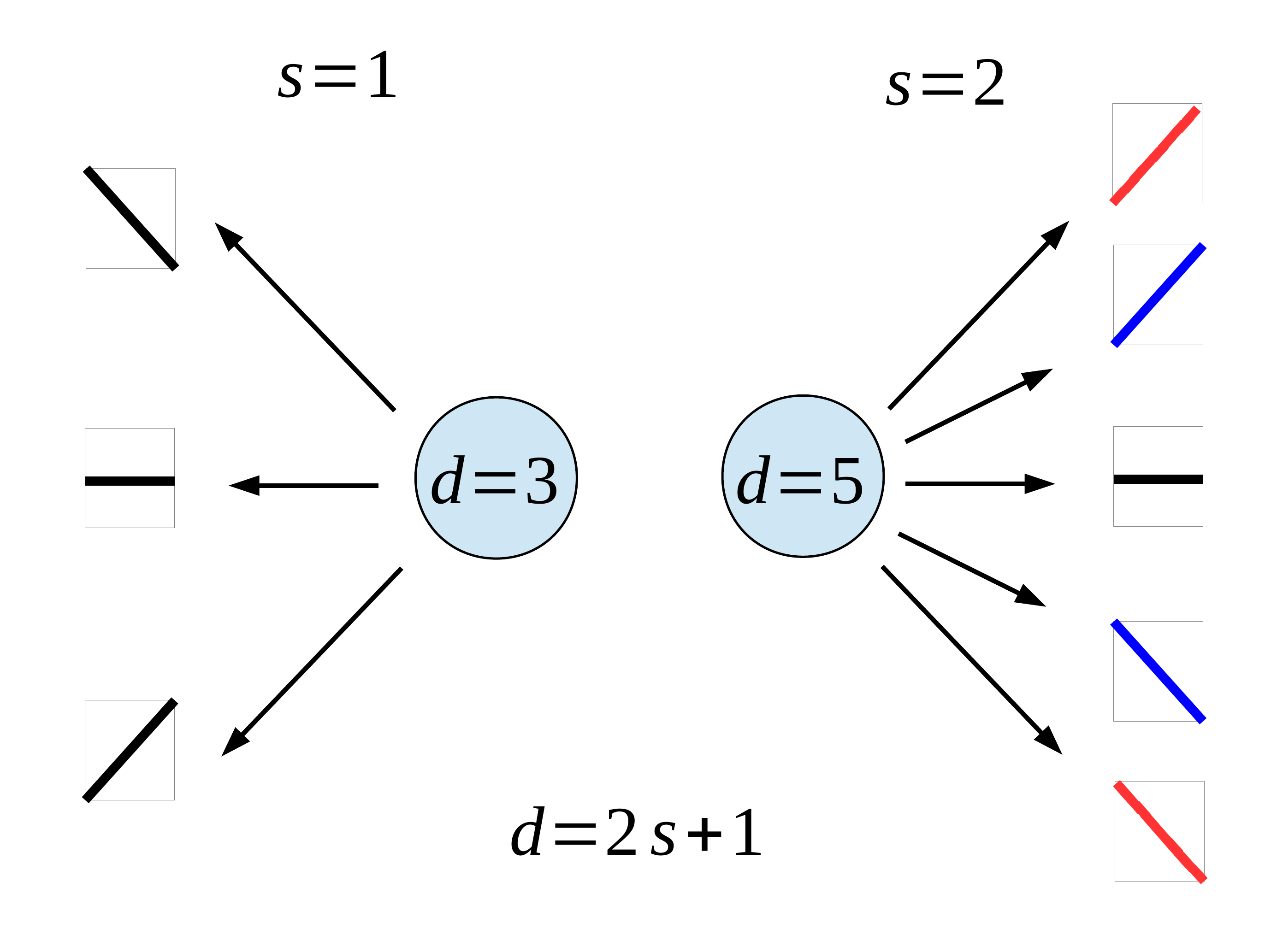}\caption{\label{fig:spin_States}Labels for the $2s+1$ states for $s=1$ and
$s=2$. Note that the flat steps are always black for all $s$. }
\par\end{centering}
\end{figure}
The predecessor of the Fredkin spin chain is the colored Motzkin spin chain \cite{movassagh2016supercritical},
which we now describe. Throughout we take the length of the chain
to be $2n$. Let us consider an integer spin$-s$ chain of length
$2n$. It is convenient to label the $d=2s+1$ spin states by steps
up and down of $s$ different colors as shown in Fig. \eqref{fig:spin_States}.
Equivalently, and for better readability, we instead use the labels
$\left\{u^1,u^2,\cdots,u^s,0,d^1,d^2,\cdots,d^s\right\}$ where $u$
means a step up and $d$ a step down. We distinguish each \textit{type}
of step by associating a color from the $s$ colors shown as superscripts
on $u$ and $d$. 

A Motzkin walk on $2n$ steps is any walk from $\left(x,y\right)=\left(0,0\right)$
to $\left(x,y\right)=\left(2n,0\right)$ with steps $\left(1,0\right)$,
$\left(1,1\right)$ and $\left(1,-1\right)$ that never passes below
the $x-$axis, i.e., $y\ge0$. An example of such a walk is shown
in Fig. \eqref{fig:Motzkin}. In our model the unique ground state is
the $s-$colored \textit{Motzkin state} which is defined to be the
uniform superposition of all $s$ colorings of Motzkin walks on $2n$
steps. 

The Schmidt rank is $\frac{s^{n+1}-1}{s-1}\approx\frac{s^{n+1}}{s-1}$,
and the half-chain entanglement entropy asymptotically is \cite{movassagh2016supercritical}
\begin{eqnarray*}
S & = & 2\log_{2}\left(s\right)\mbox{ }\sqrt{\frac{2\sigma}{\pi}}\mbox{ }\sqrt{n}+\frac{1}{2}\log_{2}\left(2\pi\sigma n\right)+\left(\gamma-\frac{1}{2}\right)\log_{2}e\quad\mbox{bits}
\end{eqnarray*}
where $\sigma=\frac{\sqrt{s}}{2\sqrt{s}+1}$ and $\gamma$ is Euler's
constant. The ground state is a pure state (the Motzkin state), whose
entanglement entropy is zero. However, the entanglement entropy quantifies
the amount of disorder produced (i.e., information lost) by ignoring
a subset of the chain. The leading order $\sqrt{n}$ scaling of the
half-chain entropy establishes that there is a large amount of quantum
correlations between the two halves. 
\begin{figure}
\begin{centering}
\includegraphics[scale=0.35]{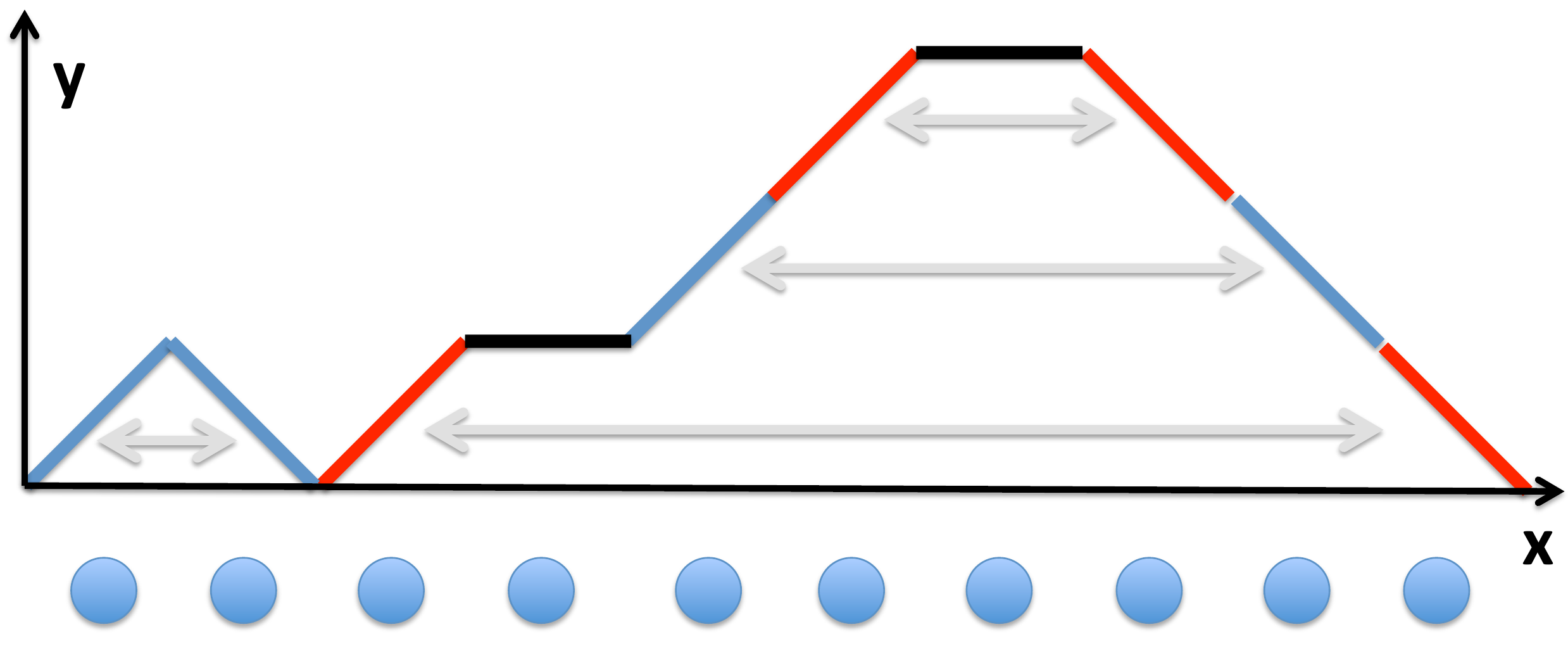}
\par\end{centering}
\centering{}\caption{\label{fig:Motzkin}A Motzkin walk with $s=2$ colors on a chain of
length $2n=10$. }
\end{figure}
Consider the following local operations to any Motzkin walk: interchanging
zero with a non-flat step (i.e., $0u^{k}\leftrightarrow u^{k}0$ or
$0d^{k}\leftrightarrow d^{k}0$) or interchanging a consecutive pair
of zeros with a peak of a given color (i.e., $00\leftrightarrow u^{k}d^{k}$).
These are shown in Fig. \eqref{fig:Local-moves-Motzkin}. Any $s-$colored
Motzkin walk can be obtained from another one by a sequence of these
local changes. 

To construct a local Hamiltonian with projectors as interactions that
has the uniform superposition of the Motzkin walks as its zero energy
ground state, each of the local terms of the Hamiltonian has to annihilate
states that are symmetric under these interchanges. Physically, local projectors
as interactions have the advantage of being robust against certain
perturbations \cite{Kraus2008}. This is important from a practical
point of view and experimental realizations.
\begin{figure}
\begin{centering}
\includegraphics[scale=0.45]{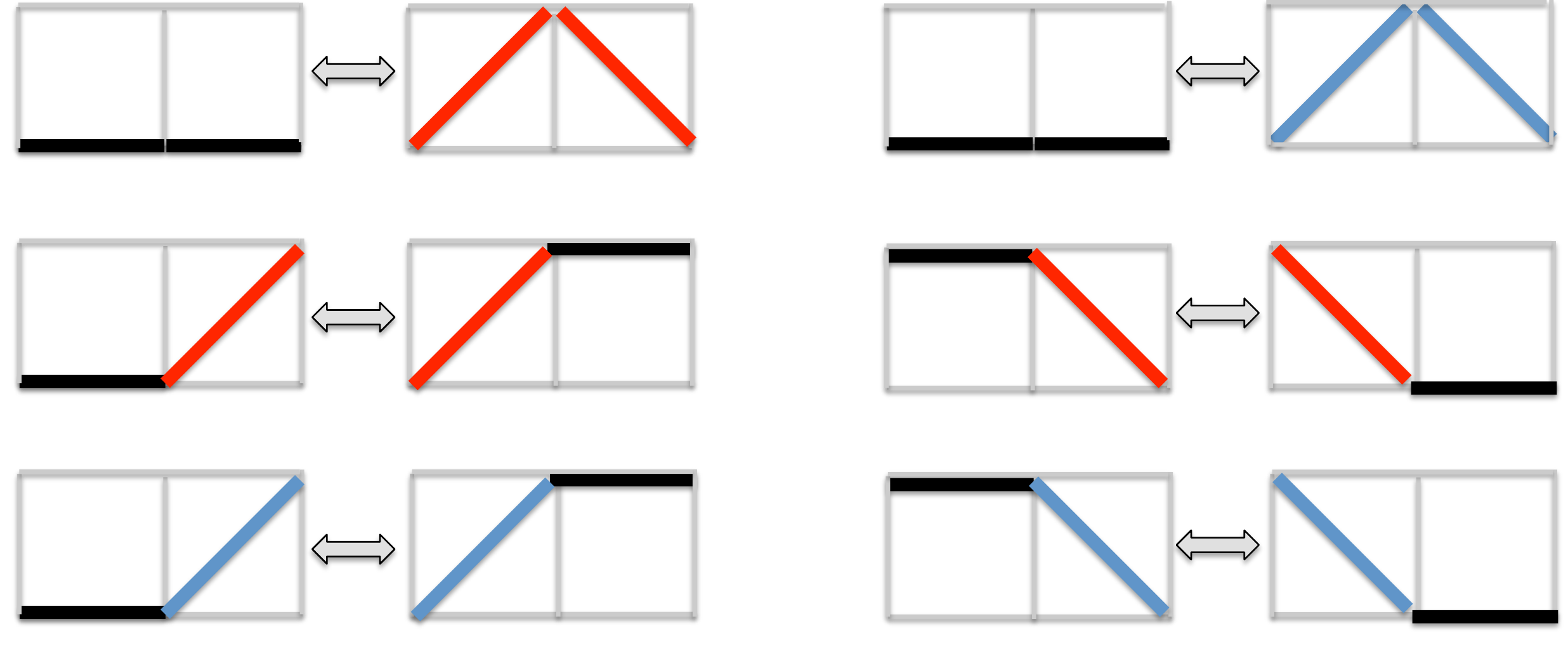}\caption{\label{fig:Local-moves-Motzkin}Local moves for $s=2$.}
\par\end{centering}
\end{figure}
The local Hamiltonian that has the Motzkin state as its unique zero
energy ground state is \cite{movassagh2016supercritical}
\begin{equation}
H=\Pi_{boundary}+\sum_{j=1}^{2n-1}\Pi_{j,j+1}+\sum_{j=1}^{2n-1}\Pi_{j,j+1}^{cross},\label{eq:H}
\end{equation}
where $\Pi_{j,j+1}$ implements the local operations discussed above
and is defined by 
\[
\Pi_{j,j+1}\equiv\sum_{k=1}^{s}\left[|U^{k}\rangle_{j,j+1}\langle U{}^{k}|+|D^{k}\rangle_{j,j+1}\langle D^{k}|+|\varphi^{k}\rangle_{j,j+1}\langle\varphi^{k}|\right]
\]
with $|U^{k}\rangle=\frac{1}{\sqrt{2}}\left[|0u^{k}\rangle-|u^{k}0\rangle\right]$,
$|D^{k}\rangle=\frac{1}{\sqrt{2}}\left[|0d^{k}\rangle-|d^{k}0\rangle\right]$
and $|\varphi^{k}\rangle=\frac{1}{\sqrt{2}}\left[|00\rangle-|u^{k}d^{k}\rangle\right]$.
The term $\Pi_{boundary}\equiv\sum_{k=1}^{s}\left[|d^{k}\rangle_{1}\langle d^{k}|+|u^{k}\rangle_{2n}\langle u^{k}|\right]$
selects out the Motzkin state by excluding all walks that start and
end at non-zero heights. Lastly, $\Pi_{j,j+1}^{cross}=\sum_{k\ne i}|u^{k}d^{i}\rangle_{j,j+1}\langle u^{k}d^{i}|$
ensures that balancing is well ordered (i.e., prohibits $00\leftrightarrow u^{k}d^{i}$);
these projectors are required only when $s>1$ and do not appear in
\cite{Movassagh2012_brackets}. 

In \cite{movassagh2016supercritical} we proved that the energy gap to the
first excited state is $\Theta(n^{-c})$. 
\subsection{\label{subsec:Korepin}The Fredkin spin chain and its energy gap}
The Fredkin spin chain is a Fermionic extension of \cite{movassagh2016supercritical}
in which $d$ is even. The spin states are naturally labeled as shown
in See Fig. \eqref{fig:Labeling-of-the}.
\begin{figure}
\begin{centering}
\includegraphics[scale=0.45]{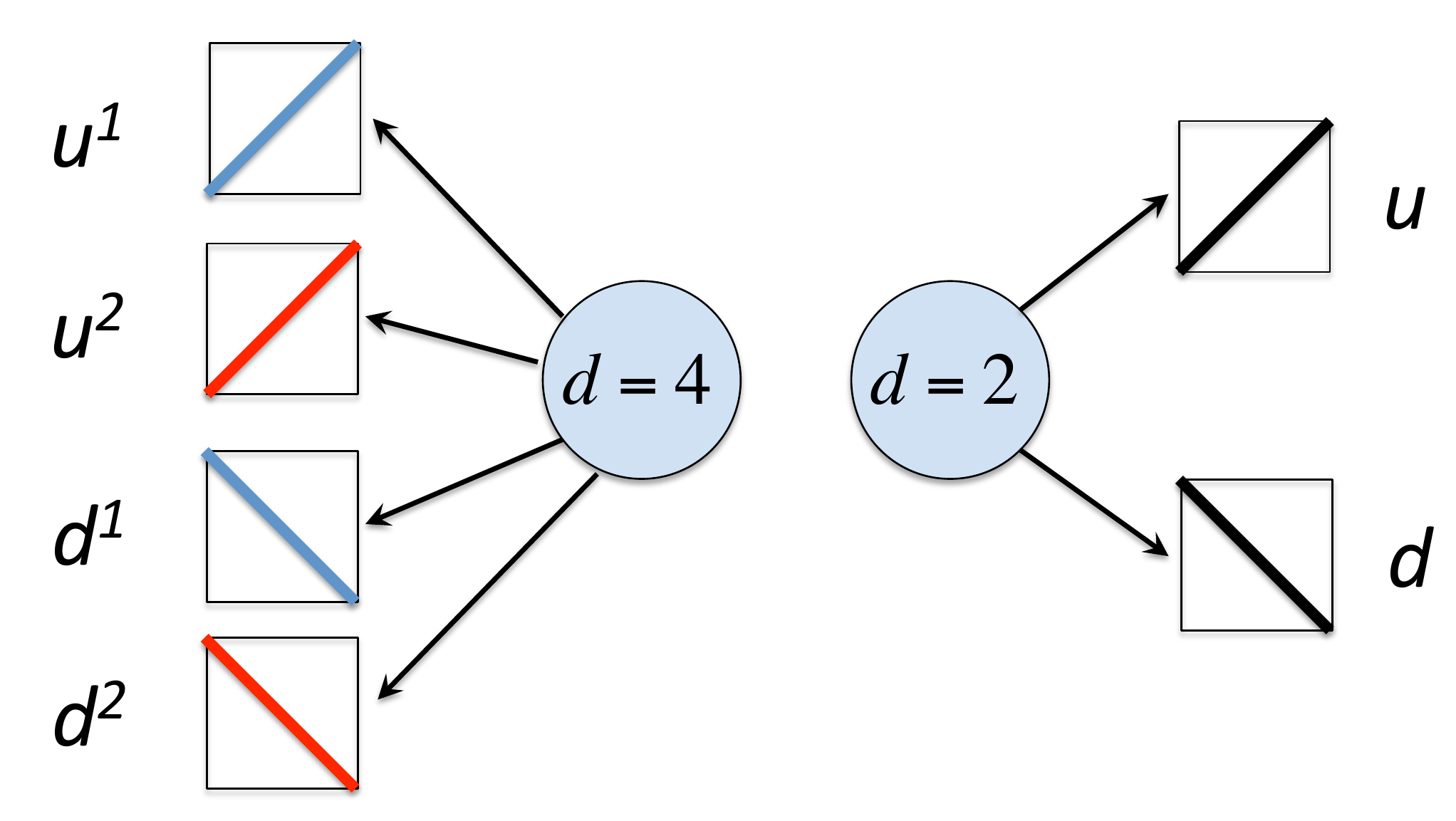}
\par\end{centering}
\caption{\label{fig:Labeling-of-the}Labeling of the spin states for half-integer
spin $s$. Left $s=3/2$ and on right $s=1/2$. }
\end{figure}

In \cite{salberger2016fredkin}, the unique ground state is the $s-$colored
\textit{Dyck state} which is defined to be the uniform superposition
of all $s$ colored Dyck paths on $2n$ steps. An example of such
a path was shown in Fig. \eqref{fig:Dyck}. 

Therefore, the Fredkin spin chain's unique ground state, denoted by
$|{\cal D}^{s}\rangle$, is 
\[
|{\cal D}^{s}\rangle=\frac{1}{\sqrt{N}}\sum_{\begin{array}{c}
w\in\{s-\text{colored }\\
\text{Dyck walks}\}
\end{array}}|w\rangle
\]
where $N=s^{n}C_{n}$ is the normalization , with $C_{n}$ denoting the $n^{th}$ Catalan number. The Hamiltonian whose zero
energy state is $|{\cal D}^{s}\rangle$ is 3-local and reads
\begin{equation}
H=\sum_{j=1}^{2n-2}\Pi_{j,j+1,j+2}+\sum_{j=1}^{2n-1}\Pi_{j,j+1}^{cross}+\Pi_{boundary}\label{eq:HKorepin}
\end{equation}
where 
\begin{eqnarray*}
\Pi_{j,j+1,j+2} & = & \sum_{k_{1},k_{2}=1}^{s}|U^{k_{1},k_{2}}\rangle_{j,j+1,j+2}\langle U^{k_{1},k_{2}}|+|D^{k_{1},k_{2}}\rangle_{j,j+1,j+2}\langle D^{k_{1},k_{2}}|+|\varphi^{k_{1},k_{2}}\rangle_{j,j+1,j+2}\langle\varphi^{k_{1},k_{2}}|\\
\Pi_{j,j+1}^{cross} & = & \sum_{k_{1}\ne k_{2}=1}^{s}|u^{k_{1}}d^{k_{2}}\rangle_{j,j+1}\langle u^{k_{1}}d^{k_{2}}|\\
\Pi_{boundary} & = & \sum_{k=1}^{s}|d^{k}\rangle_{1}\langle d^{k}|+|u^{k}\rangle_{2n}\langle u^{k}|
\end{eqnarray*}
and $|U^{k_{1},k_{2}}\rangle=\frac{1}{\sqrt{2}}\left\{ |u^{k_{1}}u^{k_{2}}d^{k_{2}}\rangle-|u^{k_{2}}d^{k_{2}}u^{k_{1}}\rangle\right\} $
, $|D^{k_{1},k_{2}}\rangle=\frac{1}{\sqrt{2}}\left\{ |d^{k_{1}}u^{k_{2}}d^{k_{2}}\rangle-|u^{k_{2}}d^{k_{2}}d^{k_{1}}\rangle\right\} $
and $|\varphi^{k_{1},k_{2}}\rangle=\frac{1}{\sqrt{2}}\left\{ |u^{k_{1}}d^{k_{1}}\rangle-|u^{k_{2}}d^{k_{2}}\rangle\right\} $,
where $k_{1}$ and $k_{2}$ denote the colors of each step. 

$\Pi_{j,j+1,j+2}$ in the Hamiltonian implements the Glauber dynamics
whose local moves are shown in Fig. \eqref{fig:localMoves_Colored}.
The projectors $|U^{k_{1},k_{2}}\rangle\langle U^{k_{1},k_{2}}|$
and $|D^{k_{1},k_{2}}\rangle\langle D^{k_{1},k_{2}}|$ implement the
exchange of a peak with a step up or down and $|\varphi^{k_{1},k_{2}}\rangle\langle\varphi^{k_{1},k_{2}}|$
implements the recoloring of a peak (compare with Fig. \eqref{fig:localMoves_Colored}).
$\sum_{j}\Pi_{j,j+1}^{cross}$ ensures that the matching steps up
and down have the same color. Lastly, $\Pi_{boundary}$ ensures that
the walk always starts and ends at height zero. 
\begin{thm*}\label{MainTheorem}
The gap of Fredkin spin chain is $\Delta(H)=O(n^{-c})$, with $c\ge 2$.
\end{thm*}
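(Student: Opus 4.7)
The plan is to establish the upper bound by the variational principle: exhibit a state $|\phi\rangle$ orthogonal to the ground state $|{\cal D}^s\rangle$ whose Rayleigh quotient $\langle\phi|H|\phi\rangle$ is $O(n^{-2})$. The natural trial vector is the ``phase-twisted'' Dyck state
\[
|\phi_\theta\rangle \;=\; \frac{1}{\sqrt{N}} \sum_{w} e^{i\theta A(w)}\, |w\rangle,
\]
where the sum runs over all $s$-colored Dyck walks $w$ of length $2n$, $A(w)$ denotes the discrete area (sum of heights) under the walk, and $\theta \in \mathbb{R}$ is a small real parameter to be optimized in terms of $n$. Since $|{\cal D}^s\rangle$ is the unique zero-energy eigenstate of $H$, the variational principle gives
\[
\Delta(H) \;\le\; \frac{\langle \phi_\theta | H | \phi_\theta \rangle}{1 - |c_\theta|^2}, \qquad c_\theta := \langle {\cal D}^s | \phi_\theta \rangle.
\]

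The numerator is controlled by a per-projector local estimate. For a rank-one three-body projector $|\psi\rangle\langle\psi|$ with $|\psi\rangle = \tfrac{1}{\sqrt 2}(|w_a\rangle - |w_b\rangle)_{j,j+1,j+2}$, summing over the environments $T$ of the remaining $2n-3$ sites gives
\[
\langle\phi_\theta|\psi\rangle\langle\psi|\phi_\theta\rangle \;=\; \frac{1}{2N}\sum_{T \in {\cal T}_{a,b}} \bigl|e^{i\theta A(w_a\otimes T)} - e^{i\theta A(w_b\otimes T)}\bigr|^2 \;=\; \bigl|1-e^{i\theta\Delta A}\bigr|^2\;\frac{|{\cal T}_{a,b}|}{2N},
\]
where ${\cal T}_{a,b}$ is the set of environments for which both completions are legal Dyck walks and $\Delta A := A(w_a\otimes T) - A(w_b\otimes T)$ is a bounded integer independent of $T$ (for example $\pm 2$ for a peak-step exchange and $0$ for a recoloring, since $u^{k_1}d^{k_1}\leftrightarrow u^{k_2}d^{k_2}$ preserves the walk's shape). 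Since $|{\cal T}_{a,b}|/N = O(1)$, each such projector contributes $O(\theta^2)$. The cross projectors $\Pi_{j,j+1}^{cross}$ and $\Pi_{boundary}$ annihilate every Dyck basis state and so contribute zero. Summing over $O(n)$ sites gives $\langle\phi_\theta|H|\phi_\theta\rangle = O(n\theta^2)$.

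To finish, I need $1 - |c_\theta|^2 = \Theta(1)$ for the chosen $\theta$. Here I invoke the universality of Brownian motion: under the uniform measure on Dyck walks of length $2n$, the rescaled process $w(\lfloor 2nt\rfloor)/\sqrt{2n}$ converges weakly on $[0,1]$ to a normalized Brownian excursion $\mathbf{e}_t$, so the normalized area $A(w)/(2n)^{3/2}$ converges in distribution to the nondegenerate positive random variable ${\cal A} := \int_0^1 \mathbf{e}_t\, dt$. Writing $c_\theta = \mathbb{E}_w[e^{i\theta A(w)}]$ and choosing $\theta = \kappa n^{-3/2}$ for a fixed nonzero constant $\kappa$, weak convergence gives $|c_\theta| \to |\mathbb{E}[e^{i\kappa' {\cal A}}]| < 1$ as $n\to\infty$, where $\kappa'$ absorbs the $O(1)$ scaling constant. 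Consequently $\Delta(H) \le \langle\phi_\theta|H|\phi_\theta\rangle/(1-|c_\theta|^2) = O(n\theta^2) = O(n^{-2})$, which is the claimed $O(n^{-c})$ bound with $c \ge 2$.

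The main obstacle is upgrading weak convergence to a quantitative bound on $|c_\theta|$ at finite $n$. Donsker-type invariance for Dyck walks is classical, but one needs $|c_\theta| \le 1 - \delta$ for some absolute $\delta > 0$ already at large but finite $n$, which requires either a uniform tightness argument for the characteristic function or a more direct combinatorial estimate of $N^{-1}\sum_w e^{i\theta A(w)}$. A secondary point is verifying that $\Delta A$ is uniformly bounded across all Fredkin moves of Fig.~\eqref{fig:localMoves_Colored} regardless of the ambient heights; this is routine since each move modifies the walk only within a window of size three and the resulting integer is manifestly $O(1)$.
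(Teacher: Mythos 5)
Your proposal establishes only half of what the theorem actually claims. The big-$O$ notation in the statement is misleading: the abstract and the proof strategy make clear that the paper is proving a two-sided bound $\Delta(H)=\Theta(n^{-c})$ for some $c\ge 2$, i.e., the gap is polynomially small but also polynomially bounded away from zero. The paper's proof has three components, and you have addressed only the first: (1) an $O(n^{-2})$ upper bound via the phase-twisted trial state (the paper's Lemma~\ref{Lem:UpperBound}), (2) an $\Omega(s\,n^{-15/2})$ lower bound on the gap within the ``balanced'' subspace, obtained by mapping the restricted Hamiltonian to the Fredkin Markov chain and applying the Diaconis--Saloff-Coste comparison result from Section~\ref{sec:New-Results-MC}, and (3) a polynomial lower bound on the smallest eigenvalue of $H$ restricted to the ``unbalanced'' subspace (states with height imbalance or color mismatch), proved by reducing to an effective next-nearest-neighbor hopping Hamiltonian on $\mathbb{C}^m$ and solving its ground state exactly. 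Items (2) and (3) are where most of the technical work lives, and they are entirely absent from your argument; without them you cannot rule out an exponentially small gap, so the theorem is not proved.

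For the part you did do, your route coincides with the paper's: the same trial state $\sum_w e^{i\theta A(w)}|w\rangle$, the same Rayleigh-quotient reasoning, the same choice $\theta=\Theta(n^{-3/2})$, and the same appeal to convergence of the rescaled Dyck walk to a Brownian excursion. In fact your per-projector estimate, which groups completions $T$ and isolates the $\bigl|1-e^{i\theta\Delta A}\bigr|^2$ factor, is a cleaner packaging of the paper's Eqs.~\eqref{eq:UpperB_1}--\eqref{eq:UpperB_2}. You are also right to flag the gap concerning a quantitative, finite-$n$ bound $|c_\theta|\le 1-\delta$: the paper passes to the limiting characteristic function of the Brownian excursion area without supplying such a finite-$n$ estimate, so this is a soft spot there as well. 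Also note one small point: you assert that for a Fredkin move both completions $w_a\otimes T$ and $w_b\otimes T$ are simultaneously legal Dyck walks; this is true (one can check that for each of $udu\leftrightarrow uud$, $dud\leftrightarrow udd$, and the recoloring moves, the minimal excursion height over the three-site window and the net height change agree), but it deserves the one-line verification rather than being taken for granted, since a failure here would introduce $\theta$-independent $O(1/N)$ terms that could wreck the $O(n\theta^2)$ count.
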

Below we prove this theorem by the following strategy:

1. Upper-bound: Choose a state $|\phi\rangle$ which is a uniform
superposition of Dyck paths except that each path has a complex phase
whose exponent is proportional to the area between the Dyck path and
the $x-$axis. We then utilize ideas from universality of Brownian
motion and Brownian excursions to prove an $O(n^{-2})$ upper bound
on the gap of the Hamiltonian.

2. Lower-bound in the balanced subspace: We restrict the Hamiltonian
to the balanced subspace which is the subspace spanned by paths that
start at zero height and end at zero height and never become negative.
We prove that the gap in this subspace is $\Delta(H)\ge O(sn^{-15/2})$.
We obtain this by casting the original Hamiltonian onto a classical
Markov chain and then use the results of Section \ref{sec:New-Results-MC}.

3. Lower-bound in the unbalanced subspace: Prove that the lowest energy
in the unbalanced subspace is polynomially small. We achieve this
by expressing the Hamiltonian as an effective next-nearest neighbors
hopping matrix. We then solve the ground state of the effective Hopping matrix exactly. \\
\subsubsection{The upper-bound on the gap is $O(n^{-c})$, $c\ge2$}
Let $s=1$ for now and $|\phi\rangle$ be any state with a constant overlap with the zero energy
ground state $|{\cal D}\rangle$. Let $|\langle\phi|D\rangle|^{2}\le1/2$; we now show that $\langle\phi|H|\phi\rangle\ge\frac{1}{2}\Delta(H)$.
First the test state can be written in the orthonormal eigenbasis
of the Hamiltonian via
\[
|\phi\rangle=\alpha_{0}|{\cal D}\rangle+\alpha_{1}|e_{1}\rangle+\dots+\alpha_{(C_{n}-1)}|e_{(C_{n}-1)}\rangle,
\]
where $|e_{1}\rangle,\dots,|e_{(C_{n}-1)}\rangle$ are the excited
states of $H$. Since $|\langle\phi|D\rangle|^{2}\le1/2$ , we have
\[
\langle\phi|H|\phi\rangle=\sum_{i\ge1}|\alpha_{i}|^{2}\langle e_{i}|H|e_{i}\rangle\ge\langle e_{1}|H|e_{1}\rangle\sum_{i\ge1}|\alpha_{i}|^{2}\ge\frac{1}{2}\langle e_{1}|H|e_{1}\rangle,
\]
where  $\Delta(H)=\langle e_{1}|H|e_{1}\rangle$ is the energy gap. 
Therefore, all we need is a good test state $|\phi\rangle$ with a
small overlap with the ground state for which we can prove $\langle\phi|H|\phi\rangle\ge O(n^{-c})$.
\begin{lem}\label{Lem:UpperBound}
The gap of Fredkin spin chain, $\Delta(H)\ge O(n^{-2})$.
\end{lem}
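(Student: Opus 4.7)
The plan is to construct a variational trial state $|\phi\rangle$ of the form
\[
|\phi\rangle \;=\; \frac{1}{\sqrt{N}} \sum_{w} e^{i\theta A(w)}\, |w\rangle,
\]
where the sum runs over all $s$-colored Dyck walks of length $2n$, $A(w)$ denotes the area between $w$ and the $x$-axis, $\theta>0$ is a small parameter to be chosen, and $N=s^{n}C_{n}$ is the normalization. By the variational inequality already established in the excerpt, it is enough to show that $|\langle \phi | \mathcal{D}^{s} \rangle|^{2} \le 1/2$ and that $\langle \phi | H | \phi \rangle = O(n^{-2})$, since then $\Delta(H) \le 2 \langle \phi | H | \phi \rangle$.

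To evaluate $\langle \phi | H | \phi \rangle$ I first observe that $|\phi\rangle$ lies in the balanced subspace spanned by Dyck walks, so $\Pi_{boundary}$ and $\Pi^{cross}_{j,j+1}$ annihilate it and only $\sum_{j}\Pi_{j,j+1,j+2}$ contributes. Each of the three families of rank-one projectors implements a local Markov-chain move $w \leftrightarrow w'$; a short geometric computation shows that a peak-shift move ($uud \leftrightarrow udu$ or the analogous $D$-type swap) changes the area by exactly $2$ independently of the height and position, while the recoloring move implemented by $|\varphi^{k_{1},k_{2}}\rangle$ leaves the area invariant. Each active projector therefore contributes $\frac{1}{N}\cdot \frac{1}{2}|e^{i\theta A(w)} - e^{i\theta A(w')}|^{2} = \frac{1}{N}(1-\cos(2\theta))$, and summing over the $O(n)$ active positions per configuration and over the $N$ configurations yields the clean estimate $\langle \phi | H | \phi \rangle \le O(n\,\theta^{2})$.

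It remains to choose $\theta$ so that the overlap $\langle \mathcal{D}^{s} | \phi \rangle = \mathbb{E}_{w}[e^{i\theta A(w)}]$ (with $w$ uniformly random over $s$-colored Dyck walks) is bounded away from $1$. This is where the Brownian-excursion input enters: it is classical that $A(w)/n^{3/2}$ converges in distribution to the area $\mathcal{A}$ under a standard Brownian excursion, whose law (the Airy distribution) is non-degenerate. Thus for $\theta = c\,n^{-3/2}$ with a fixed small constant $c>0$, the limiting characteristic function $|\mathbb{E}[e^{i c \mathcal{A}}]|$ is strictly less than $1$, so for $n$ large enough the overlap drops below $1/\sqrt{2}$. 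Combining with the energy estimate gives $\Delta(H) \le 2\langle \phi|H|\phi\rangle \le O(n \cdot n^{-3}) = O(n^{-2})$.

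The main obstacle I foresee is making the Brownian-excursion step quantitative: weak convergence alone yields only an asymptotic statement without an explicit rate, so one needs either a uniform bound on the characteristic function of the rescaled area (which does follow from known tail and smoothness estimates for the Airy distribution) or a direct moment-method control of $\mathbb{E}[e^{i\theta A(w)}]$ at finite $n$. A secondary subtlety is handling the colorings, but since both $A(w)$ and the local moves depend only on the underlying uncolored shape of $w$, the $s^{n}$ coloring factor cancels cleanly in both the overlap and the energy, and the argument reduces to the uncolored case without loss.
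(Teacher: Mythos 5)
Your proposal is correct and follows essentially the same route as the paper: you build the same area-twisted test state, observe that each admissible swap ($uud\leftrightarrow udu$ or $dud\leftrightarrow udd$) changes the area by exactly $2$ while the recoloring projectors annihilate $|\phi\rangle$, obtain $\langle\phi|H|\phi\rangle = O(n\theta^2)$, and then choose $\theta\sim n^{-3/2}$ using convergence of the rescaled Dyck-path area to the Brownian-excursion (Airy) distribution so that the overlap with $|\mathcal{D}^s\rangle$ is bounded away from $1$. The caveat you flag about making the weak-convergence step quantitative is genuine and is present equally in the paper's own argument, which likewise invokes the scaling limit without an explicit finite-$n$ rate.
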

\begin{proof}
Let $|\phi\rangle=\frac{1}{\sqrt{C_{n}}}\sum_{\mathsf{s}\in D}e^{2\pi i\tilde{A}_{s}\tilde{\theta}}|\mathsf{s}\rangle$,
and we have 
\[
\langle\phi|D\rangle=\frac{1}{C_{n}}\sum_{\mathsf{s}\in D}e^{2\pi i\tilde{A}_{s}\tilde{\theta}}
\]
as $n\rightarrow\infty$, the random walk converges to a Wiener process
and a random Dyck walk to a Brownian excursion. Currently the length
of the walk is over $2n$ steps. We need to map this walk to the interval
$[0,1]$ and do so by scaling the walks such that they take the standard
form.

Recall that a Brownian excursion, $B(t)$, is a Brownian motion that
takes place on the unit interval, starts and ends at height zero,
and never becomes negative. That is, $B(t)$ is a Brownian motion
on the $t\in[0,1]$, with $B(0)=B(1)=0$ and $B(t)\ge0$. Let the
area below the Brownian excursion be \cite{janson2007brownian} 
\[
B_{ex}=\int_{0}^{1}B(t)\mbox{ }dt.
\]

Let $f_{A}(x)$ be the probability density function of $B_{ex}$. 
It is known that \cite[p. 92]{janson2007brownian}
\[
f_{A}(x)=\frac{2\sqrt{6}}{x^{2}}\sum_{j=1}^{\infty}v_{j}^{2/3}e^{-v_{j}}U(-\frac{5}{4},\frac{4}{3};v_{j})\qquad x\in[0,\infty)
\]
with $v_{j}=2|a_{j}|^{3}/27x^{2}$ where $a_{j}$ are the zeros of
the Airy function, $Ai(x)$, and $U$ is the confluent hypergeometric
function. The expected value and standard of deviation of $B_{ex}$, respectively, 
are 
\begin{eqnarray*}
\mathbb{E}[B_{ex}] & = & \frac{1}{2}\sqrt{\frac{\pi}{2}}\\
\sigma_{B_{ex}} & = & \sqrt{5/12-\pi/8}.
\end{eqnarray*}

The sum of the areas of the Dyck paths of length $2n$ is \cite{chapman1999moments}
\[
A_{2n}=4^{n}-\frac{1}{2}\left(\begin{array}{c}
2n+2\\
n+1
\end{array}\right)\sim2\mbox{ }\frac{4^{n}}{\sqrt{\pi(n+1)}}.
\]
Since $C_{n}\sim\frac{4^{n}}{n^{3/2}\sqrt{\pi}}$, the expected area
is 
\[
\mathbb{E}[\tilde{A}_{s}]=\frac{A_{2n}}{C_{n}}\sim\sqrt{\pi}n^{3/2}.
\]

We now define the scaling constant, $c$, by 
\[
\mathbb{E}[\tilde{A}_{s}]=cn^{3/2}\mathbb{E}[B_{ex}].
\]
This gives $c=2\sqrt{2}$. Make the following change of variables $\tilde{A}_{p}=2\sqrt{2}n^{3/2}x$
and $\tilde{\theta}=\frac{n^{-3/2}}{2\sqrt{2}}\theta$. With these
scalings we have mapped the problem onto a standard Brownian excursion. 

The overlap, in the limit, is the characteristic function of the density
of the area under the excursion 
\[
\lim_{n\rightarrow\infty}\langle{\cal D}|\phi\rangle\approx F_{A}(\theta)\equiv\int_{0}^{\infty}f_{A}(x)e^{2\pi ix\theta}.
\]

Taking $\theta=1/\sigma$ with $\sigma$ being the standard of deviation, as
done in \cite{movassagh2016supercritical}, we arrive at $\tilde{\theta}=\frac{n^{-3/2}}{2\sigma\sqrt{2}}=\frac{n^{-3/2}}{\sqrt{10/3-\pi}}$. 

$\sum_{j}\langle\phi|U\rangle_{j,j+1,j+2}\langle U|\phi\rangle$
is nonzero only if it relates two walks that differ by a local move
at the positions $j,j+1,j+2$. In particular,
\begin{equation}
\sum_{j}\langle\phi|U\rangle_{j,j+1,j+2}\langle U|\phi\rangle=\frac{1}{2C_{n}}\sum_{j=1}^{2n-2}\sum_{\mathsf{s,t}\in D}e^{2\pi i\left(\tilde{A}_{s}-\tilde{A}_{t}\right)\tilde{\theta}}\langle\mathsf{s}|U\rangle_{j,j+1,j+2}\langle U|\mathsf{t}\rangle.\label{eq:upperbound_localmove1}
\end{equation}

The change in the area is either zero or one. There are three types
of nonzero contributions per $j,j+1,j+2$ in the foregoing equation
\[
\begin{array}{ccc}
\mathsf{s}_{j,j+1,j+2}=\mathsf{t}_{j,j+1,j+2} & : & e^{2\pi i\left(\tilde{A}_{s}-\tilde{A}_{t}\right)\tilde{\theta}}\langle\mathsf{s}|U\rangle_{j,j+1,j+2}\langle U|\mathsf{t}\rangle=1,\\
\mathsf{s}_{j,j+1,j+2}=udu\quad\mathsf{t}_{j,j+1,j+2}=uud & : & e^{2\pi i\left(\tilde{A}_{s}-\tilde{A}_{t}\right)\tilde{\theta}}\langle\mathsf{s}|U\rangle_{j,j+1,j+2}\langle U|\mathsf{t}\rangle=-e^{-4\pi i\tilde{\theta}},\\
\mathsf{s}_{j,j+1,j+2}=uud\quad\mathsf{t}_{j,j+1,j+2}=udu & : & e^{2\pi i\left(\tilde{A}_{s}-\tilde{A}_{t}\right)\tilde{\theta}}\langle\mathsf{s}|U\rangle_{j,j+1,j+2}\langle U|\mathsf{t}\rangle=-e^{4\pi i\tilde{\theta}}.
\end{array}
\]

The dependence is only on the difference of $\tilde{A}_{s}-\tilde{A}_{t}$
which is zero or two. Using the values of these three cases in Eq.
\ref{eq:upperbound_localmove1}, we find
\begin{equation}
\sum_{j=1}^{2n-2}\langle\phi|U\rangle_{j,j+1,j+2}\langle U|\phi\rangle=\frac{1}{C_{n}}\sum_{j=1}^{2n-2}a_{j}\left[1-\cos(4\pi\tilde{\theta})\right]\approx\frac{1}{C_{n}}\sum_{j=1}^{2n-2}8\pi^{2}\tilde{\theta}^{2}a_{j}\label{eq:UpperB_1}
\end{equation}
where $a_{j}$ is the number of strings that have $uud$ or $udu$
in their $j,j+2,j+2$ positions. An entirely a similarly calculation
gives
\begin{equation}
\sum_{j}\langle\phi|D\rangle_{j,j+1,j+2}\langle D|\phi\rangle=\frac{1}{C_{n}}\sum_{j=1}^{2n-2}b_{j}\left[1-\cos(4\pi\tilde{\theta})\right]\approx\frac{1}{C_{n}}\sum_{j=1}^{2n-2}8\pi^{2}\tilde{\theta}^{2}b_{j}\label{eq:UpperB_2}
\end{equation}
where $b_{j}$ is the number of strings that have $dud$ or $udd$
in their $j,j+2,j+2$ positions. Summing up Eq. \eqref{eq:UpperB_1}
and Eq. \eqref{eq:UpperB_2} and recalling that $\tilde{\theta}=\frac{n^{-3/2}}{\sqrt{10/3-\pi}}$, 
we have
\[
\langle\phi|H|\phi\rangle=\frac{8\pi^{2}}{C_{n}}\left(\frac{n^{-3}}{10/3-\pi}\right)\sum_{j=1}^{2n-2}a_{j}+b_{j}
\]
Since $(a_{j}+b_{j})/C_{n}=\mathcal{O}(1)$ , we have that $\langle\phi|H|\phi\rangle=\mathcal{O}(n^{-2})$.
If we take $s\ge1$, then similar arguments give
\[
\langle\phi|H|\phi\rangle\propto\frac{n^{-3}}{C_{n}s^{n}}\sum_{j=1}^{2n-2}\tilde{a}_{j}+\tilde{b}_{j}\sim O(n^{-2}).
\] \end{proof}
\subsubsection{The lower bound on the gap in balanced subspace is $O(n^{-15/2})$}
For now let us be restricted to the balanced subspace, where all the walks
in the superposition start at height zero and end at height zero and
never become negative. Moreover all up steps and down steps
have correctly ordered matching colors. 

In this subspace, $|{\cal D}^{s}\rangle$ is the unique ground state
of the frustration free Hamiltonian
\[
H=\sum_{j=1}^{2n-2}\mathbb{I}_{d^{j-1}}\otimes\left\{ \sum_{k_{1},k_{2}=1}^{s}|U^{k_{1},k_{2}}\rangle_{j,j+1,j+2}\langle U^{k_{1},k_{2}}|+|D^{k_{1},k_{2}}\rangle_{j,j+1,j+2}\langle D^{k_{1},k_{2}}|\right\} \otimes\mathbb{I}_{d^{2n-j-2}},
\]
where $\Pi_{boundary}$ and $\sum_{j=1}^{2n-1}\Pi_{j,j+1}^{cross}$
automatically vanish in the balanced subspace.

The Hamiltonian can be mapped to a classical Markov chain, denoted by the matrix $P$, with the entries
\begin{equation}
P(\mathsf{t,s})=\delta_{\mathsf{s,t}}-\frac{1}{2s(n-1)}\sqrt{\frac{\pi(\mathsf{s})}{\pi(\mathsf{t})}}\langle\mathsf{t}|H|\mathsf{s}\rangle,\label{eq:Pst_raw}
\end{equation}
where $\pi(\mathsf{s})\equiv\langle\mathsf{s}|{\cal D}\rangle^{2}=s^{-n}C_{n}^{-1}$
is the stationary distribution. Note that the gap of the Hamiltonian
is now related to the spectral gap of the Markov chain (because of
the minus sign). Denote the gap by $\Delta(H)$. Since they
are related by a similarity transformation, we have that
\begin{equation}
\Delta(H)=2s(n-1)(1-\lambda_{2}(P)).\label{eq:GapDefinition}
\end{equation}

Since $\pi(\mathsf{s})=\pi(\mathsf{t})$, the Markov chain takes the
simpler form 
\begin{eqnarray}
P(\mathsf{t,s}) & = & \delta_{\mathsf{s,t}}-\frac{1}{2s(n-1)}\langle\mathsf{t}|H|\mathsf{s}\rangle,\label{eq:P_st}
\end{eqnarray}
\begin{lem}
$P$ defined by Eq. \eqref{eq:P_st} is a reversible Markov chain
\end{lem}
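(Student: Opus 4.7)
The plan is to verify the three properties that together make $P$ a reversible Markov chain: (i) row sums equal one; (ii) every entry of $P$ is non-negative; and (iii) detailed balance $\pi(\mathsf{t})P(\mathsf{t},\mathsf{s})=\pi(\mathsf{s})P(\mathsf{s},\mathsf{t})$ holds. Since $\pi$ is already uniform on the balanced subspace ($\pi(\mathsf{s})=s^{-n}/C_n$), condition (iii) collapses to symmetry of the matrix $P$, so the real work is (i), (ii), and the symmetry check.

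For the row sum, I would use the frustration-free property of $H$. The zero-energy ground state is $|{\cal D}^s\rangle=(s^n C_n)^{-1/2}\sum_{\mathsf{s}}|\mathsf{s}\rangle$, and $H|{\cal D}^s\rangle=0$ gives $\sum_{\mathsf{s}}\langle\mathsf{t}|H|\mathsf{s}\rangle=0$ for every $\mathsf{t}$; plugging into Eq.~(\ref{eq:P_st}) yields $\sum_{\mathsf{s}}P(\mathsf{t},\mathsf{s})=1$. For symmetry (and hence detailed balance, given the uniform stationary distribution), I would observe that each local projector $|U^{k_1,k_2}\rangle\langle U^{k_1,k_2}|$, $|D^{k_1,k_2}\rangle\langle D^{k_1,k_2}|$, $|\varphi^{k_1,k_2}\rangle\langle\varphi^{k_1,k_2}|$ is Hermitian with real entries in the standard basis, so $\langle\mathsf{t}|H|\mathsf{s}\rangle=\langle\mathsf{s}|H|\mathsf{t}\rangle$, from which $P(\mathsf{t},\mathsf{s})=P(\mathsf{s},\mathsf{t})$ is immediate.

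For non-negativity of off-diagonal entries I would use stoquasticity in the sense of Definition~\ref{Def:Stoch}. Each local term is of the form $|\phi\rangle\langle\phi|$ with $|\phi\rangle=\tfrac{1}{\sqrt 2}(|a\rangle-|b\rangle)$ for two distinct balanced basis strings $a,b$; its only off-diagonal entries are $-\tfrac{1}{2}$ between $|a\rangle$ and $|b\rangle$. Hence every off-diagonal $\langle\mathsf{t}|H|\mathsf{s}\rangle$ is either $0$ or $-\tfrac{1}{2}$ (noting that two distinct strings can be connected by at most one of the allowed three-site moves, so no cancellation or doubling occurs), and multiplying by $-1/[2s(n-1)]$ in Eq.~(\ref{eq:P_st}) yields a non-negative number.

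The main obstacle, and the step I would devote most care to, is non-negativity of the diagonal entries $P(\mathsf{t},\mathsf{t})=1-\langle\mathsf{t}|H|\mathsf{t}\rangle/[2s(n-1)]$. The idea is to bound $\langle\mathsf{t}|H|\mathsf{t}\rangle$ by counting, for each three-site window $j,j+1,j+2$, how many local projectors see a nonzero diagonal entry on $|\mathsf{t}\rangle$: at each fixed window the pattern of $\mathsf{t}$ is one specific length-three string, which matches the diagonal support of at most one of the $U^{k_1,k_2}$ or $D^{k_1,k_2}$ or $\varphi^{k_1,k_2}$ projectors (with specific $k_1,k_2$ determined by $\mathsf{t}$), and contributes exactly $\tfrac{1}{2}$. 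Summing over the $2n-2$ windows gives $\langle\mathsf{t}|H|\mathsf{t}\rangle\le (2n-2)\cdot\tfrac{1}{2}=n-1$, so the normalizing choice $\beta=1/[2s(n-1)]$ in Eq.~(\ref{eq:P_st}) is precisely what guarantees $P(\mathsf{t},\mathsf{t})\ge 1-1/(2s)\ge 0$ for all $s\ge 1$. Once these three pieces are in place, $P$ is a non-negative stochastic symmetric matrix with the uniform stationary distribution, and reversibility follows immediately.
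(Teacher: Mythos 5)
Your argument follows essentially the same route as the paper: use $H|{\cal D}^s\rangle=0$ and completeness to get unit row sums, use symmetry of $\langle\mathsf{t}|H|\mathsf{s}\rangle$ together with uniformity of $\pi$ to get detailed balance. One difference in organization: the paper's own proof of this lemma only establishes (i) row sums equal one, (ii) $\pi$ is stationary, and (iii) detailed balance, and defers the actual non-negativity of the entries of $P$ to the \emph{next} lemma (where it shows $P(\mathsf{s},\mathsf{s})\ge 1/2$ and $P(\mathsf{t},\mathsf{s})=\tfrac{1}{4s(n-1)}$ off-diagonal). You fold the non-negativity check into this lemma, which is arguably a more complete treatment of ``$P$ is a (reversible) Markov chain.''

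There is one slip you should fix. In your diagonal-counting argument you allow the $|\varphi^{k_1,k_2}\rangle\langle\varphi^{k_1,k_2}|$ projectors to contribute, and claim that at each window at most one projector is turned on. But the Hamiltonian that actually defines $P$ via Eq.~\eqref{eq:P_st} is the restriction to the balanced subspace, namely
\[
H=\sum_{j=1}^{2n-2}\mathbb{I}\otimes\Bigl\{\sum_{k_1,k_2}|U^{k_1,k_2}\rangle_{j,j+1,j+2}\langle U^{k_1,k_2}|+|D^{k_1,k_2}\rangle_{j,j+1,j+2}\langle D^{k_1,k_2}|\Bigr\}\otimes\mathbb{I},
\]
with no recoloring terms. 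If the $\varphi$ terms were present, your ``at most one projector per window'' claim would fail for $s>1$: a window whose two-site subpattern is $u^a d^a$ has nonzero overlap with $|\varphi^{k_1,k_2}\rangle$ for every pair with $k_1=a\ne k_2$ or $k_2=a\ne k_1$, giving $2(s-1)$ contributions of $1/2$ rather than one, and the bound $\langle\mathsf{t}|H|\mathsf{t}\rangle\le n-1$ would be false. For the restricted $H$ your counting is correct and recovers the paper's $P(\mathsf{t},\mathsf{t})\ge 1/2$; simply delete the reference to $\varphi$ and the argument is sound. The off-diagonal stoquasticity argument and the symmetry argument are both correct as written.
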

\begin{proof}
1. $P$ is stochastic. The row sums are equal to one: 
\begin{eqnarray*}
\sum_{\mathsf{s}}P(\mathsf{t,s}) & = & \sum_{\mathsf{s}}\{\delta_{\mathsf{s,t}}-\frac{1}{2s(n-1)}\langle\mathsf{t}|H|\mathsf{s}\rangle\}=1-\sum_{\mathsf{s}}\frac{\pi(\mathsf{t})^{-1/2}}{2s(n-1)}\langle\mathsf{t}|H|\mathsf{s}\rangle\sqrt{\pi(\mathsf{s})}\\
 & = & 1-\frac{\pi(\mathsf{t})^{-1/2}}{2s(n-1)}\sum_{\mathsf{s}}\langle\mathsf{t}|H|\mathsf{s}\rangle\langle\mathsf{s}|{\cal D}\rangle=1-\frac{\pi(\mathsf{t})^{-1/2}}{2s(n-1)}\sum_{\mathsf{s}}\langle\mathsf{t}|H|{\cal D}\rangle=1
\end{eqnarray*}
where we used the completeness $\sum_{\mathsf{s}}|\mathsf{s}\rangle\langle\mathsf{s}|=1$
and the fact that $H|{\cal D}^{s}\rangle=0$ as $|{\cal D}^{s}\rangle$
is the zero energy ground state of $H$. Therefore, $P$ has row sums
equal to one.

2. $P$ has a unique stationary state:
\begin{eqnarray*}
\sum_{\mathsf{t}}\pi(\mathsf{t})P(\mathsf{t,s}) & = & \pi(\mathsf{s})-\sum_{\mathsf{t}}\frac{1}{2s(n-1)}\sqrt{\pi(\mathsf{s})\pi(\mathsf{t})}\langle\mathsf{t}|H|\mathsf{s}\rangle=\pi(\mathsf{s})-\sum_{\mathsf{t}}\frac{1}{2s(n-1)}\langle\mathsf{t}|{\cal D}^{s}\rangle\langle\mathsf{t}|H|\mathsf{s}\rangle\langle\mathsf{s}|{\cal D}^{s}\rangle\\
 & = & \pi(\mathsf{s})-\sum_{\mathsf{t}}\frac{1}{2s(n-1)}\frac{1}{\sqrt{C_{n}}s^{n/2}}\langle\mathsf{t}|H|\mathsf{s}\rangle\langle\mathsf{s}|{\cal D}^{s}\rangle=\pi(\mathsf{s})-\frac{1}{2s(n-1)}\langle\mathsf{{\cal D}}^{s}|H|\mathsf{s}\rangle\langle\mathsf{s}|{\cal D}^{s}\rangle=\pi(\mathsf{s})
\end{eqnarray*}

3. $P$ is reversible. Noting that $\pi(\mathsf{t})=\pi(\mathsf{s})$
and , $\langle\mathsf{t}|H|\mathsf{s}\rangle=\langle\mathsf{s}|H|\mathsf{t}\rangle,$
it is easy to check that $\pi(\mathsf{t})P(\mathsf{t,s})=\pi(\mathsf{s})P(\mathsf{s,t})$. 
\end{proof}
\begin{lem}
$P(\mathsf{s},\mathsf{s})\ge1/2$ and for $\mathsf{s}\ne\mathsf{t}$,
$P(\mathsf{t},\mathsf{s})$ is nonzero if only if $\mathsf{t}$ and
$\mathsf{s}$ differ by moving a peak by one position, in which case
its value is 
\[
P(\mathsf{t},\mathsf{s})=\frac{1}{4s(n-1)}.
\]
\end{lem}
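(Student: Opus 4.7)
The proof splits naturally into two parts: bounding the diagonal entries $P(\mathsf{s},\mathsf{s})$ from below, and computing the off-diagonal entries exactly.

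First I would handle the off-diagonal case. Since $\pi(\mathsf{s})=\pi(\mathsf{t})$, the formula reduces to $P(\mathsf{t},\mathsf{s})=-\frac{1}{2s(n-1)}\langle\mathsf{t}|H|\mathsf{s}\rangle$ for $\mathsf{t}\ne\mathsf{s}$, so it suffices to read off $\langle\mathsf{t}|H|\mathsf{s}\rangle$ from the explicit projectors. Each rank-one term $|U^{k_{1},k_{2}}\rangle_{j,j+1,j+2}\langle U^{k_{1},k_{2}}|$ contributes only two off-diagonal matrix elements, each equal to $-\tfrac{1}{2}$, connecting the strings $u^{k_{1}}u^{k_{2}}d^{k_{2}}$ and $u^{k_{2}}d^{k_{2}}u^{k_{1}}$ in positions $j,j+1,j+2$ (with the rest of the string held fixed). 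Similarly $|D^{k_{1},k_{2}}\rangle\langle D^{k_{1},k_{2}}|$ contributes off-diagonal elements $-\tfrac{1}{2}$ relating $d^{k_{1}}u^{k_{2}}d^{k_{2}}$ and $u^{k_{2}}d^{k_{2}}d^{k_{1}}$. Each such pattern swap corresponds to displacing a peak (of color $k_{2}$) by one position to the left or right of an adjacent step. Because the four triples involved in $U^{k_{1},k_{2}}$ and $D^{k_{1},k_{2}}$ patterns are mutually distinct as 3-letter strings, at most one term in $H$ contributes to any given off-diagonal entry, giving $\langle\mathsf{t}|H|\mathsf{s}\rangle=-\tfrac{1}{2}$ exactly when $\mathsf{s}$ and $\mathsf{t}$ differ by such a single peak-displacement move, and $0$ otherwise. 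Substituting yields $P(\mathsf{t},\mathsf{s})=\tfrac{1}{4s(n-1)}$.

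For the diagonal, I would bound $\langle\mathsf{s}|H|\mathsf{s}\rangle$ by counting how many of the local terms can act nontrivially on a single basis state $|\mathsf{s}\rangle$. At each fixed window $(j,j+1,j+2)$, the substring $\mathsf{s}_{j,j+1,j+2}$ can match at most one of the four 3-letter pattern types $uud$, $udu$, $dud$, $udd$ (after fixing colors), and when it does it contributes exactly $\tfrac{1}{2}$ to the diagonal (since the relevant projector is rank one with squared coefficient $\tfrac{1}{2}$). Summing over the $2n-2$ windows gives the crude bound $\langle\mathsf{s}|H|\mathsf{s}\rangle\le\tfrac{1}{2}(2n-2)=n-1$. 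Hence
\[
P(\mathsf{s},\mathsf{s})=1-\frac{\langle\mathsf{s}|H|\mathsf{s}\rangle}{2s(n-1)}\ge 1-\frac{1}{2s}\ge\frac{1}{2},
\]
valid for every $s\ge1$.

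No step here requires any hard combinatorics: both claims reduce to unpacking the explicit form of the $|U^{k_{1},k_{2}}\rangle$ and $|D^{k_{1},k_{2}}\rangle$ projectors. The one point worth checking carefully is that the off-diagonal matrix element between two adjacent basis states receives a contribution from exactly one term of $H$ (so that there is no accidental cancellation or doubling); this follows because the sites $j,j+1,j+2$ involved in a local peak-move, together with the colors $(k_{1},k_{2})$, are uniquely determined by the pair $(\mathsf{s},\mathsf{t})$. The remaining computation is then a direct substitution into \eqref{eq:P_st}.
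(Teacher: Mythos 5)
Your argument follows the paper's proof essentially verbatim: the same reduction $P(\mathsf{t},\mathsf{s})=-\frac{1}{2s(n-1)}\langle\mathsf{t}|H|\mathsf{s}\rangle$ for $\mathsf{t}\ne\mathsf{s}$, the same reading-off of the $-\tfrac12$ matrix element from the explicit $|U^{k_1,k_2}\rangle$, $|D^{k_1,k_2}\rangle$ projectors, and the same diagonal counting bound $\langle\mathsf{s}|H|\mathsf{s}\rangle\le n-1$ giving $P(\mathsf{s},\mathsf{s})\ge1-\tfrac{1}{2s}\ge\tfrac12$.

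However, the step you single out as the crux --- that exactly one term of $H$ contributes to a given off-diagonal entry because ``the sites $j,j+1,j+2$ and the colors are uniquely determined by the pair $(\mathsf{s},\mathsf{t})$'' --- is not true in general, and your justification fails in one degenerate configuration. Suppose $\mathsf{s}$ contains $u^{k}u^{k}d^{k}d^{k}$ at positions $p,\dots,p+3$ and $\mathsf{t}$ is identical except with $u^{k}d^{k}u^{k}d^{k}$ there (for $s=1$, $n=2$ this is the pair $uudd$, $udud$). Moving the central peak one step to the left or one step to the right produces the \emph{same} string, so the $U^{k,k}$ projector at window $(p,p+1,p+2)$ and the $D^{k,k}$ projector at window $(p+1,p+2,p+3)$ each contribute $-\tfrac12$; hence $\langle\mathsf{t}|H|\mathsf{s}\rangle=-1$ and $P(\mathsf{t},\mathsf{s})=\tfrac{1}{2s(n-1)}$, twice the stated value. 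One can check this is the only interference possible: two contributing windows must overlap in two sites, must be a $U^{k,k}$ and a $D^{k,k}$ term respectively, and this forces exactly the above pattern. To be fair, the paper's own proof makes the same tacit assumption when it collapses the sum over $j$ in Eq. \eqref{eq:Pst_nonzero} to a single term, so this is a defect of the lemma as stated rather than of your write-up alone; it is also harmless downstream, since only $P(\mathsf{s},\mathsf{s})\ge\tfrac12$ and the lower bound $\min_{z\ne w}P(z,w)\ge\tfrac{1}{4s(n-1)}$ enter the comparison argument, and in the degenerate case the entry is larger, not smaller. Still, if you claim exact equality you should either exclude this configuration or state the value as $\tfrac{1}{4s(n-1)}$ or $\tfrac{1}{2s(n-1)}$ according to whether one or two local moves connect the pair.
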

\begin{proof}
From Eq. \eqref{eq:P_st} we have $P(\mathsf{s},\mathsf{s})=1-\frac{1}{2s(n-1)}\langle\mathsf{s}|H|\mathsf{s}\rangle\ge1/2$
since $\langle\mathsf{s}|H|\mathsf{s}\rangle\le(n-1)$. Now take $\mathsf{t}\ne\mathsf{s}$:
it is clear that if $\mathsf{s}$ differs from $\mathsf{t}$ in more
than three consecutive positions then $P(\mathsf{s},\mathsf{t})=0$.
Therefore $P(\mathsf{s},\mathsf{t})$ is nonzero if and only if $\mathsf{t}$
is obtained from $\mathsf{s}$ by moving a single peak by one position.
Now suppose $\mathsf{s}$ is obtained from $\mathsf{t}$ by moving
a single peak. Then for any local move that moves only a single peak
and leaves the rest of the positions equal we have $\langle\mathsf{s}|H|\mathsf{t}\rangle=-1/2$.
Suppose $\mathsf{s}$ and $\mathsf{t}$ differ at positions $j,j+1,j+2$,
then 
\begin{eqnarray}
P(\mathsf{s},\mathsf{t}) & = & -\frac{1}{2s(n-1)}\sum_{j=1}^{2n-2}\langle\mathsf{s}|\mathbb{I}_{d^{j-1}}\otimes\left\{ |U\rangle_{j,j+1,j+2}\langle U|+|D\rangle_{j,j+1,j+2}\langle D|\right\} \otimes\mathbb{I}_{d^{2n-j-2}}|\mathsf{t}\rangle\label{eq:Pst_nonzero}\\
 & = & -\frac{1}{2s(n-1)}\langle\mathsf{s}|\mathbb{I}_{d^{j-1}}\otimes\left\{ |U\rangle_{j,j+1,j+2}\langle U|+|D\rangle_{j,j+1,j+2}\langle D|\right\} \otimes\mathbb{I}_{d^{2n-j-2}}|\mathsf{t}\rangle=\frac{1}{4s(n-1)}.\nonumber 
\end{eqnarray}\end{proof}
Using Eqs. \eqref{eq:Gap_Markov_Final} and \eqref{eq:GapDefinition}
the lower bound on the gap in the balanced subspace is
\begin{equation}
\Delta(H)\ge O(s n^{-15/2}).\label{eq:GapLowerBound}
\end{equation}
It remains to lower bound the smallest eigenvalues restricted to the unbalanced subspace.
\subsubsection{The smallest energy in the unbalanced subspace is $O(n^{-c})$, $c\gg 1$}
Above we proved a polynomial gap to the first excited state by restricting
the Hamiltonian to a balanced subspace where no penalties result
from boundary terms (i.e., height imbalance) or any of the $\Pi_{j,j+1}^{cross}$'s
(i.e., mismatching of colors). We now prove a polynomially small lower
bound on the energies of the states in the unbalanced subspace.
\begin{lem}\label{Lem:unbalancedLowerbound}
The ground state energies (smallest eigenvalues) of the Hamiltonian restricted to the unbalanced subspace are at least $O(n^{-c})$ with $c\gg 1$.
\end{lem}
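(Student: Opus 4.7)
The plan is to exploit the sector structure that the three-site projectors $\Pi_{j,j+1,j+2}$ impose on the unbalanced subspace, and inside each sector to rewrite $H$ as an effective tight-binding Hamiltonian whose ground state can be diagonalized explicitly. Each Glauber move either hops an unmatched step past a matched $u^k d^k$ peak (via $|U^{k_1,k_2}\rangle$ or $|D^{k_1,k_2}\rangle$) or recolors a matched peak (via $|\varphi^{k_1,k_2}\rangle$), so it preserves the ordered color sequence of those steps that cannot be paired into consecutive $u^k d^k$ peaks. The unbalanced subspace therefore decomposes into $H$-invariant sectors labeled by this unmatched-step profile together with any surviving cross-color violations, and it suffices to lower bound the ground state energy in each sector separately.

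First I would fix a sector and apply the similarity transformation of Eq. \eqref{eq:P_st} to turn the $\Pi_{j,j+1,j+2}$ part of $H$ into a real, symmetric hopping operator. An unmatched step moves by two lattice positions each time a matched peak is shuffled past it, so after collecting all Glauber contributions one obtains a next-nearest-neighbor tight-binding Hamiltonian for the $k$ unmatched particles on a one-dimensional lattice of length $O(n)$, with hopping amplitudes given by ratios of Catalan numbers counting the matched Dyck arcs in the gaps between consecutive unmatched particles. The boundary term $\Pi_{boundary}$ contributes Dirichlet-type walls that forbid an unmatched $d$ at site $1$ and an unmatched $u$ at site $2n$, while $\Pi^{cross}$ adds a nonnegative on-site potential whenever two steps of different colors are forced to meet. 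The single-particle ground state energy of this tight-binding model with Dirichlet walls is $\Theta(\sin^2(\pi/(4n)))=\Theta(n^{-2})$, and the $k$-particle ground state is obtained by antisymmetrization (a Slater-determinant construction) which only increases the energy. Undoing the $2s(n-1)$ rescaling of Eq. \eqref{eq:GapDefinition} then yields a polynomial lower bound on the unbalanced energies of the original $H$.

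The main obstacle will be bookkeeping the hopping amplitudes carefully enough to certify that the effective Hamiltonian really is a tight-binding model with polynomially bounded couplings: the matrix elements of $H$ in the sector basis involve ratios of the form $C_a C_b / C_{a+b+1}$ of Catalan numbers taken over the Dyck arcs between unmatched steps, and one must check that these ratios do not become super-polynomially small for any admissible arrangement of unmatched particles. Sectors whose only source of unbalance is a cross-color mismatch require a short additional variational argument leveraging the unavoidable on-site penalty from $\Pi^{cross}$. Once these uniform polynomial estimates on the effective couplings and potentials are in place, the explicit diagonalization of the one-dimensional tight-binding spectrum delivers the desired $\Omega(n^{-c})$ lower bound on all unbalanced ground state energies.
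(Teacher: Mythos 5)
Your high-level picture is correct: the $\Pi_{j,j+1,j+2}$ terms do shuffle an unmatched step past $u^k d^k$ arcs, the unbalanced sector does reduce to an effective next-nearest-neighbor hopping problem, and the hopping amplitudes do involve Catalan-number ratios. But there are two genuine gaps in how you then propose to close the argument.

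First, your plan to read off the ground state energy from the formula $\Theta(\sin^2(\pi/(4n)))$ tacitly assumes a translation-invariant tight-binding chain, and that is precisely what you do not have. As you yourself note, the hopping amplitudes are position-dependent; concretely the paper obtains $\alpha_j^2 = \tfrac{1}{2s}\,C_{m-j-2}/C_{m-j}$ and $\beta_j^2 = \tfrac{1}{2s}\,C_{j-1}/C_{j+1}$, which vary by $O(1)$ factors across the chain. Plane waves / sine modes do not diagonalize such an operator, so the $\sin^2(\pi/4n)$ estimate is not justified and the claimed $\Theta(n^{-2})$ bound does not follow from your argument. (The paper does not claim $n^{-2}$ here; it only gets a polynomial bound with a much weaker exponent.) What the paper does instead is observe that the hopping part $H_{move}=\sum_j\Gamma_{j,j+1,j+2}$ is frustration free with the explicit zero-energy state $|g\rangle\propto\sum_j\sqrt{C_{j-1}C_{m-j}}|j\rangle$, maps $H_{move}$ to a reversible stochastic matrix with nearly uniform stationary distribution (polynomial fluctuations in the Catalan ratios), and bounds its spectral gap by the Jerrum--Sinclair canonical-path method. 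The boundary term is then not a Dirichlet wall but a rank-one delta potential $|1\rangle\langle1|$; its effect is controlled by another application of the projection lemma, using the key fact that $\langle 1|g\rangle^2=C_m/C_{m+1}$ is a constant.

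Second, the Slater-determinant treatment of the multi-particle ($p+q>1$) case is not sound: the unmatched steps are not free fermions, since their hopping amplitudes depend on the lengths of the Dyck arcs between them, i.e., they interact, and an unmatched $d$ cannot cross an unmatched $u$ to its left (nor site $1$). Antisymmetrizing single-particle orbitals therefore does not produce eigenstates, and ``only increases the energy'' is not established. The paper avoids this entirely by selecting a single imbalanced step $x$ and noting that the remaining imbalanced steps $y$ act as domain walls which $x$ can never pass; one then restricts to the interval $[1,m]$ between the left boundary and the first $y$, reducing rigorously to a genuine one-particle problem. Relatedly, you also skip the perturbation-theoretic scaffolding (two uses of the projection lemma with a small parameter $\epsilon$) that is needed to legitimately pass from the full $H$ restricted to the unbalanced sector to the effective one-particle Hamiltonian $H_{eff}$ acting on the span of the $|\omega_j\rangle$'s. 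Without that step, the ``effective tight-binding Hamiltonian'' is only a heuristic rather than a rigorous lower bound. Your reduction of the pure color-mismatch case to a height-imbalance problem is, in spirit, what the paper does (pin the mismatched steps and analyze the shorter chain), so that part of your sketch is on the right track.
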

\begin{proof}
We separate the proof into two parts: 1. height-balanced subspace
with only mismatched colors and 2. height-imbalance subspace without
any mismatches. Simultaneous occurrence
of these penalties only increases the energy. 

\textbf{\textit{Color mismatching only:}} In this case the penalty
is only due to $\sum_{j}\Pi_{j,j+1}^{cross}$ as all the states have
zero initial and final heights and they automatically vanish at the
boundaries. The Hamiltonian restricted to this space is
\[
H=\sum_{j}\Pi_{j,j+1,j+2}+\sum_{j}\Pi_{j,j+1}^{cross}.
\]
 Assume we have the minimum (a single mismatch) resulting in the smallest
energy penalty. For example we can have $w_{0}u^{1}w_{1}u^{2}w_{3}d^{1}w_{4}d^{2}w_{5}$,
where $w_{0},\dots,w_{5}$ are balanced $s-$colored Dyck walks and
$u^{1}$ and $u^{2}$ are step ups of two different colors being balanced
by $d^{1}$ and $d^{2}$, respectively (hence a mismatch of colors).
Suppose $u^{2}$ is at the site $i$ and $d^{2}$ is at the site $k$.
Now take the amplitude of $\Pi_{i,i+1,i+2}$ and $\Pi_{i-1,i,i+1}$
to zero which would pin $u^{2}$ to remain at site $i$ and overall
lower the energy. Also let's take the amplitude of $\Pi_{k,k+1,k+2}$
and $\Pi_{k-1,k,k+1}$ to zero as to pin $d^{2}$ at site $k$. This
reduces the problem to a chain of length $k-i$ with a single imbalance
$d^{1}$. Therefore, a polynomial lower bound on the smallest energy
of height-imbalance (only) subspaces lower bounds the purely  mismatched
subspace. So we now turn to lower bounding the smallest energy of
the subspace with only a height-imbalanced. \\

\textbf{\textit{Height imbalance only:}} In this subspace $\Pi_{j,j+1}^{cross}$
automatically vanishes on any state. So the Hamiltonian restricted
to this subspace is
\[
H=\sum_{j}\Pi_{j,j+1,j+2}+\Pi_{boundary}.
\]

Our goal is to give $n^{-\mathcal{O}(1)}$ lower bound to the minimum
energy in the subspace where there are $p>0$ extra down and $q>0$
extra up steps. Any string on the imbalanced subspace is uniquely
written as
\[
\mathsf{s}=w_{0}d^{\centerdot}w_{1}d^{\centerdot}\dots d^{\centerdot}w_{p}w_{0}u^{\centerdot}v_{1}u^{\centerdot}v_{2}\dots u^{\centerdot}v_{q}
\]
where $w_{i}$ and $v_{i}$ are colored-Dyck walks and $d^{\centerdot}$
and $u^{\centerdot}$ denote the unmatched steps down and up of any
color, respectively. Since Dyck walks are of even length, it is clear
that $p+q$ must be even. 

Because of symmetry it is sufficient to consider $p>0$ only, where
there are only imbalance down steps. We can omit the boundary terms
$|u^{k}\rangle_{2n}\langle u^{k}|$, which only decreases the energy.
Our Hamiltonian becomes
\[
H=\sum_{j}\Pi_{j,j+1,j+2}+\sum_{k=1}^{s}|d^{k}\rangle_{1}\langle d^{k}|,
\]
where $\Pi_{j,j+1,j+2}$ is as before. Note that for a chain of length
$2n$ the number of imbalances needs to be even. Assume we have the
minimum violation due to a height imbalance at the boundary and let's
focus on a string with a single step up imbalance denoted by $w_{0}d^{k}w_{1}d^{k}w_{2}$,
where as before $w_{0}$, $w_{1}$ and $w_{2}$ are balanced $s-$colored
Dyck walks and $d^{k}$ is a step of color $k\in\{1,\dots,s\}$. This
allows us to drop $\sum_{k=1}^{s}|u^{k}\rangle_{2n}\langle u^{k}|$,
which only decreases the energy. Let us denote the first imbalanced
$d^{k}$ by $x$ and the rest of the imbalanced steps by $y$, then
the Hamiltonians is
\[
H^{x}=|x\rangle_{1}\langle x|+\sum_{j}\Pi_{j,j+1,j+2}+\Theta_{j,j+1,j+2}^{x}+\Theta_{j,j+1,j+2}^{y},
\]
where $\Theta_{j,j+1,j+2}^{x}$ is the projector that spans $\sum_{k=1}^{s}\frac{1}{\sqrt{2}}\left(|u^{k}d^{k}x\rangle-|xu^{k}d^{k}\rangle\right)$
and $\Theta_{j,j+1,j+2}^{y}$ projects onto $\sum_{k=1}^{s}\frac{1}{\sqrt{2}}\left(|u^{k}d^{k}y\rangle-|yu^{k}d^{k}\rangle\right)$.
We only have $|x\rangle\langle x|$ as a $y$ step can never pass
to the left of the $x$ down step. We think of $x$ and $y$ steps
as 'particles' that can move on the chain. In any configuration of
$y$ particles, the $x$ particle vanishes upon touching any $y$;
therefore $y$ particles serve as kind of domain walls for $x$.
Since our goal is to get a lower bound on the energy, we can only analyze
the interval between $1$ and the first $y$ particle which is located in
the position $m$ which is between the position of $x$ and $2n$.
So we can equivalently define the length of the chain to be $m$ and
analyze a single imbalance (i.e., $x$) on this chain. Since, we have
a single imbalance, $m$ necessarily is odd. 

The Hilbert space now is the span of
\[
|\mathsf{s}\rangle\otimes|x\rangle_{j}\otimes|\mathsf{t}\rangle,\quad\text{where }\text{\ensuremath{\mathsf{s}}}\in{\cal D}_{j-1}^{s}\text{ and }\mathsf{t}\in{\cal D}_{m-j}^{s}
\]
where $j$ is odd and ${\cal D}_{k}^{s}$ denotes the set of all $s-$colored
Dyck walks of length $k$. We can now use perturbation theory. Define
$H_{\epsilon}^{x}$ by 
\[
H_{\epsilon}^{x}=\sum_{j}\Pi_{j,j+1,j+2}+\epsilon\left\{ \sum_{j}\Theta_{j,j+1,j+2}^{x}+|x\rangle_{1}\langle x|\right\} ,
\]
where $0<\epsilon\le1$. Since the Hamiltonian is a sum of projectors
we have $\Delta(H^{x})\ge\Delta(H_{\epsilon}^{x})$. The spectral
gap of $H_{0}^{x}=\sum_{j}\Pi_{j,j+1,j+2}$ is $1/\text{poly}(m)$
as we now show. This is sufficient because in principle $m$ can be
as large as $2n-1$. 

First note that the position of $x-$particle at $j$ is an invariant
of $H_{0}^{x}$. Also any $\Pi_{j,j+1,j+2}$ vanishes at $j$ where
the $x-$particle is. So we can analyze $H_{0}^{x}$ separately on
the  disjoint intervals $[1,j-1]$ and $[j+1,m]$. The ground subspace
of $H_{0}^{x}$ is spanned by normalized states 
\begin{equation}
|\omega_{j}\rangle=|{\cal D}_{j-1}^{s}\rangle\otimes|x\rangle_{j}\otimes|{\cal D}_{m-j}^{s}\rangle.\label{eq:spanUnbalanced}
\end{equation}
The gap of $H_{0}^{x}$ can be computed separately on these intervals.
But above we showed that the gap of the Hamiltonian in the balanced
$s-$colored Dyck space is polynomially small in the length of the
chain. Therefore, we have that $\Delta(H_{0}^{x})\ge n^{-\mathcal{O}(1)}$. 

Suppose now $\epsilon>0$, the first order Hamiltonian acting on $|\omega_{j}\rangle$
states describes a ``hopping'' of the particle $x$ over $u^{k}d^{k}$
pairs on the chain of length $2n$ with a delta potential at $j=1$. 

Comment: The hopping is nonstandard in that the $x$ particle hops
two positions to the right (left)  and the $u^{k}d^{k}$ pair moves
one position to the left (right). 

The parameters of the hopping Hamiltonian can now be obtained. Let
us , as before, denote the number of the $s-$colored Dyck walks in
$|{\cal D}_{k}^{s}\rangle$ by $D_{k}^{s}=s^{k}C_{k}$, where $C_{k}$
is the $k^{th}$ Catalan number. The diagonal terms are:
\begin{eqnarray*}
\alpha_{j}^{2}\equiv\langle\omega_{j}|\Theta_{j,j+1,j+2}^{x}|\omega_{j}\rangle & = & \frac{s}{2}\frac{s^{m-j-2}C_{m-j-2}}{s^{m-j}C_{m-j}}=\frac{1}{2s}\frac{C_{m-j-2}}{C_{m-j}}\\
\langle\omega_{j+1}|\Theta_{j,j+1,j+2}^{x}|\omega_{j+1}\rangle & = & 0\\
\beta_{j}^{2}\equiv\langle\omega_{j+2}|\Theta_{j,j+1,j+2}^{x}|\omega_{j+2}\rangle & = & \frac{1}{2s}\frac{C_{j-1}}{C_{j+1}},
\end{eqnarray*}
and off-diagonal terms are
\begin{eqnarray*}
\langle\omega_{j}|\Theta_{j,j+1,j+2}^{x}|\omega_{j+1}\rangle & = & 0\\
-\alpha_{j}\beta j\equiv\langle\omega_{j}|\Theta_{j,j+1,j+2}^{x}|\omega_{j+2}\rangle & = & -\frac{s}{2}\frac{\sqrt{D_{j-1}^{s}D_{m-j-2}^{s}}}{\sqrt{D_{j+1}^{s}D_{m-j}^{s}}}=-\frac{1}{2s}\frac{\sqrt{C_{j-1}C_{m-j-2}}}{\sqrt{C_{j+1}C_{m-j}}}.
\end{eqnarray*}
We arrive at the effective next nearest neighbors hopping Hamiltonian
acting on $\mathbb{C}^{m}$:
\begin{equation}
H_{eff}=|1\rangle\langle1|+\sum_{j=1}^{m-2}\Gamma_{j,j+1,j+2}\label{Eq:Heff}
\end{equation}
where $\Gamma_{j,j+1,j+2}$ is a rank-$1$ operator 
\begin{eqnarray*}
\Gamma_{j,j+1,j+2} & = & \alpha_{j}^{2}|j\rangle\langle j|+\beta_{j}^{2}|j+2\rangle\langle j+2|\\
 &  & -\alpha_{j}\beta_{j}\left\{ |j\rangle\langle j+2|+|j+2\rangle\langle j|\right\} .
\end{eqnarray*}

For example the matrix representation of $\sum_{j=1}^{m-2}\Gamma_{j,j+1,j+2}$
for $m=5$ is
\[
\left[\begin{array}{ccccc}
\alpha_{1}^{2} & 0 & -\alpha_{1}\beta_{1} & 0 & 0\\
0 & \alpha_{2}^{2} & 0 & -\alpha_{2}\beta_{2} & 0\\
-\alpha_{1}\beta_{1} & 0 & \alpha_{3}^{2}+\beta_{1}^{2} &  & -\alpha_{3}\beta_{3}\\
0 & -\alpha_{2}\beta_{2} & 0 & \beta_{2}^{2} & 0\\
0 & 0 & -\alpha_{3}\beta_{3} & 0 & \beta_{3}^{2}
\end{array}\right]=\frac{1}{2s}\left[\begin{array}{ccccc}
\frac{C_{2}}{C_{4}} & 0 & -\sqrt{\frac{C_{0}}{C_{4}}} & 0 & 0\\
0 & \frac{C_{1}}{C_{3}} & 0 & -\frac{C_{1}}{C_{3}} & 0\\
-\sqrt{\frac{C_{0}}{C_{4}}} & 0 & 2\frac{C_{0}}{C_{2}} &  & -\sqrt{\frac{C_{0}}{C_{4}}}\\
0 & -\frac{C_{1}}{C_{3}} & 0 & \frac{C_{1}}{C_{3}} & 0\\
0 & 0 & -\sqrt{\frac{C_{0}}{C_{4}}} & 0 & \frac{C_{2}}{C_{4}}
\end{array}\right]
\]

Ignoring the repulsive delta potential at $j=1$, i.e., $|1\rangle\langle1|$,
we have $H_{move}=\sum_{j=1}^{m-2}\Gamma_{j,j+1,j+2}$, which is frustration
free. It has the unique zero energy ground state that is:
\begin{equation}
|g\rangle\propto s^{\frac{m-1}{2}}\sum_{j=1}^{m}\sqrt{C_{j-1}C_{m-j}}\text{ }|j\rangle.\label{eq:g}
\end{equation}

The gap of $H_{\epsilon}^{x}$ can be related to the gap of $H_{eff}$
via the projection lemma \cite{KempeKitaevRegev}. This lemma says
that 
\[
\lambda_{1}(H_{\epsilon}^{x})\ge\epsilon\lambda_{1}(H_{eff})-\frac{\mathcal{O}(\epsilon^{2})\left\Vert K\right\Vert }{\lambda_{2}(H_{0}^{x})-2\epsilon\left\Vert K\right\Vert },
\]
where $K=|x\rangle_{1}\langle x|+\sum_{j=1}^{m-2}\Theta_{j,j+1,j+2}^{x}$
is the perturbation operator. Since $\lambda_{2}(H_{0}^{x})\ge m^{-O(1)}$,
we can choose $\epsilon$ polynomially small in $1/m$ such that $2\epsilon\left\Vert K\right\Vert $
is small compared to $\lambda_{2}(H_{0}^{x})$. With this choice of
$\epsilon$ one gets
\[
\lambda_{1}(H_{\epsilon}^{x})\ge\epsilon\lambda_{1}(H_{eff})-\mathcal{O}(\epsilon^{2})m^{\mathcal{O}(1)}.
\]
The problem reduces to showing that $\lambda_{1}(H_{eff})\ge m^{-\mathcal{O}(1)},$
where $H_{eff}$ is the single particle hopping Hamiltonian defined
by Eq. \eqref{Eq:Heff}.

We now use the projection lemma to bound the gap of $H_{eff}$ by
first bounding the gap of $H_{move}$ and then treating $|1\rangle\langle1|$
as a perturbation to $H_{move}$. As before we map $H_{move}$ onto
a classical stochastic matrix that describes a random walk on $[1,m]$
with stationary distribution $\pi(j)\equiv\langle j|g\rangle^{2}$:
\[
P(j,k)=\delta_{j,k}-\sqrt{\frac{\pi(k)}{\pi(j)}}\langle j|H_{move}|k\rangle.
\]
Similar to before, $\sqrt{\pi(k)}|k\rangle$ is in the kernel of $H_{move}$
and we have $\sum_{k}P_{j,k}=1$ and $\sum_{j}\pi(j)P(j,k)=\pi(k)$.
The entries of $P(j,k)$ can be bounded. The off diagonal elements
are simply
\begin{eqnarray*}
P(j,j+2) & = & -\frac{\langle j+2|g\rangle}{\langle j|g\rangle}\langle j|H_{move}|j+2\rangle=\frac{\sqrt{C_{j+1}C_{m-j-2}}}{\sqrt{C_{j-1}C_{m-j}}}\alpha_{j}\beta_{j}=\frac{1}{2s}\frac{C_{m-j-2}}{C_{m-j}}\\
P(j+2,j) & = & -\frac{\langle j|g\rangle}{\langle j+2|g\rangle}\langle j+2|H_{move}|j\rangle=\frac{\sqrt{C_{j-1}C_{m-j}}}{\sqrt{C_{j+1}C_{m-j-2}}}\alpha_{j}\beta_{j}=\frac{1}{2s}\frac{C_{j-1}}{C_{j+1}}
\end{eqnarray*}

The Catalan numbers have the property that asymptotically they grow
as $C_{k}\sim\frac{4^{k}}{k^{3/2}\sqrt{\pi}}$ and for any $k\ge1$
one has $1/16\le C_{k}/C_{k+2}\le1$, which implies that 
\[
\frac{1}{32s}\le P(j,j\pm2)\le\frac{1}{2s}\qquad\forall j.
\]
This implies that the diagonal entries $P(j,j)\ge0$ and therefore
$P(j,k)$ is indeed a transition matrix from $j$ to $k$. The steady
state is nearly uniform because
\[
\frac{\pi(k)}{\pi(j)}\equiv\frac{\langle k|g\rangle}{\langle j|g\rangle}=\frac{C_{k-1}C_{m-k}}{C_{j-1}C_{m-j}}
\]
is upper bounded by $m^{\mathcal{O}(1)}$ and lower bounded by $m^{-\mathcal{O}(1)}$.
We can now bound the spectral gap of $P$ using the canonical path
theorem of Jerrum and Sinclair \cite{Sinclair1992}. This technique
gives $1-\lambda_{2}(P)\ge1/\rho L$ and $\rho$ is the maximum edge
load defined by \cite{Sinclair1992}
\begin{equation}
\rho=\max_{(a,b)\in E}\frac{1}{\pi(a)P(a,b)}\sum_{\mathsf{s},\mathsf{t}\text{:}(a,b)\in\gamma_{\mathsf{s},\mathsf{t}}}\pi(\mathsf{s})\pi(\mathsf{t}),\label{eq:Rho_CanonicalPath-1}
\end{equation}
where $a,b$ are arbitrary vertices and $\gamma_{\mathsf{s},\mathsf{t}}$
is a canonical path and we take $L\equiv\max_{\mathsf{s},\mathsf{t}}|\gamma_{\mathsf{s},\mathsf{t}}|$.
This formula resembles Eq. \eqref{eq:A} in the comparison theorem;
however, it is different because it defines the canonical paths on
the same Markov chain.

The Canonical path $\gamma(\mathsf{s},\mathsf{t})$ simply moves $x$
from $\mathsf{s}$ to $\mathsf{t}$. With the lower bound of $m^{-\mathcal{O}(1)}$,
in the denominator of Eq. \eqref{eq:Rho_CanonicalPath-1}, we arrive
at $1-\lambda_{2}(P)\ge m^{-\mathcal{O}(1)}$. It proves that $\lambda_{2}(H_{Move})\ge m^{-\mathcal{O}(1)}$. 

To finish the proof we apply the projection lemma to  Eq. \eqref{Eq:Heff},
treating $|1\rangle\langle1|$ as a perturbation. Since 
\[
\langle1|g\rangle^{2}=\pi(1)=\frac{s^{m-1}C_{0}C_{m}}{s^{m-1}\sum_{j=1}^{m}C_{j-1}C_{m-j}},
\]
using the recursion property of Catalan numbers, we conclude that
$\langle1|g\rangle^{2}=C_{m}/C_{m+1}$, which is a constant. This
proves that $\lambda_{1}(H_{eff})\ge m^{-\mathcal{O}(1)}$. Moreover, recall from the discussion above, that this lower
bound is sufficient for lower bounding the ground state when there
is mismatches in the colors of the paths as well. This completes the proof.\end{proof}
Lemmas \eqref{Lem:UpperBound} and \eqref{Lem:unbalancedLowerbound}, and Eq. \eqref{eq:GapLowerBound} prove the Theorem in  \eqref{MainTheorem}.  

\section{Conclusions}
We have proved that the energy gap of Fredkin spin chain is $\Delta(H)=O(n^{-c})$ with $c\ge 2$. This in particular implies that the continuum limit of this model will not be a conformal field theory either.

We now have quantum spin chain models that violate the area law exponentially more than previously thought possible. This  started with \cite{movassagh2016supercritical}, which is a local and an integer spin (bosonic) chain model.  Here we proved that the corresponding fermionic model introduced  in  \cite{salberger2016fredkin,dell2016violation} also has a gap that closes slowly (polynomially small with the system's size) as in \cite{movassagh2016supercritical}. Violating the area law is not necessarily hard; however, proposing models that are local, translationally invariant with a unique ground state that violate the area law is more difficult.  An outstanding challenge is to find an exactly solvable model that violates the area law maximally (i.e., fact or of $n$) {\it and} has a polynomially small gap. So far the maximum violation has been achieved with an exponentially small (in system size squared) gap \cite{zhang2017novel,levine2016gap}.
\section{Acknowledgements}
I thank Peter W. Shor for fruitful discussions. I also thank Scott
Sheffield, Baruch Schieber, Lionel Levine, Sergey Bravyi and Richard
Stanley. I am grateful for the support and freedom provided by the
Herman Goldstine fellowship in mathematical sciences at IBM TJ Watson
Research Center. I acknowledge the support of the Simons Foundation
and the American Mathematical Society for the AMS-Simons travel grant.
\bibliographystyle{plain}
\bibliography{mybib}
\end{document}